\documentclass[pdflatex,sn-mathphys-num]{sn-jnl}

\usepackage[T1]{fontenc}
\usepackage{lmodern}
\usepackage{microtype}
\usepackage{amsmath,amssymb,amsthm,mathtools}
\usepackage{booktabs}
\usepackage{algorithm}
\usepackage[noend]{algpseudocode}
\usepackage{enumitem}
\usepackage{xcolor}
\usepackage{tikz}
\usetikzlibrary{arrows.meta,positioning,fit}
\usepackage[nameinlink,capitalize,noabbrev]{cleveref}

\makeatletter
\renewcommand{\theHALG@line}{\thealgorithm.\arabic{ALG@line}}
\makeatother

\hypersetup{
  colorlinks=true,
  linkcolor=blue!55!black,
  citecolor=green!40!black,
  urlcolor=blue!60!black,
  pdfauthor={Hao Lu, Yuan Yuan, Xingwu Liu, Xin Han},
  pdftitle={Two-Machine Flow Shop with a Fixed Non-Availability Interval on the Second Machine}
}

\newtheorem{theorem}{Theorem}[section]
\newtheorem{lemma}[theorem]{Lemma}

\theoremstyle{definition}

\newtheorem{remark}[theorem]{Remark}

\newcommand{\J}{\mathcal{J}}
\newcommand{\E}{\mathcal{E}}
\newcommand{\V}{\mathcal{V}}
\newcommand{\OPT}{\mathrm{OPT}}
\newcommand{\CJ}{C^{\mathrm J}}

\newcommand{\ind}{\mathbf{1}}
\newcommand{\disj}{\mathbin{\dot\cup}}

\begin{document}

\title[Two-Machine Flow Shop with a Fixed Non-Availability Interval]{Two-Machine
Flow Shop with a Fixed Non-Availability Interval on the Second Machine}

\author[1]{\fnm{Hao} \sur{Lu}}
\author*[2]{\fnm{Yuan} \sur{Yuan}\,\href{https://orcid.org/0000-0001-5403-6576}{\textsuperscript{ORCID}}}
\email{yuanyuan12430@126.com}
\author[3]{\fnm{Xingwu} \sur{Liu}}
\author[1]{\fnm{Xin} \sur{Han}\,\href{https://orcid.org/0000-0002-1694-7712}{\textsuperscript{ORCID}}}

\affil*[1]{\orgdiv{School of Software},
  \orgname{Dalian University of Technology},
  \orgaddress{\city{Dalian}, \postcode{116620}, \state{Liaoning},
  \country{China}}}

\affil[2]{\orgdiv{School of Information and Communication Engineering},
  \orgname{Dalian Minzu University},
  \orgaddress{\city{Dalian}, \postcode{116600}, \state{Liaoning},
  \country{China}}}

\affil[3]{\orgdiv{School of Mathematical Sciences},
  \orgname{Dalian University of Technology},
  \orgaddress{\city{Dalian}, \postcode{116024}, \state{Liaoning},
  \country{China}}}

\abstract{
This paper investigates a two-machine permutation flow shop in which the second
machine is unavailable during one fixed interval $[s,t]$.  We consider the
non-resumable setting: an operation interrupted by the interval must restart
from the beginning after the machine becomes available.  The objective is to
minimize the makespan.  We establish three results.  First, we give a
polynomial-time $10/7$-approximation algorithm.  Second, we develop a
pseudopolynomial-time exact dynamic program.  Third, we prove that the problem
does not admit a fully polynomial-time approximation scheme (FPTAS) unless
$\mathrm{P}=\mathrm{NP}$, even when the non-availability interval has unit
length.  Together, these results characterize a distinctive complexity profile:
exact optimization is possible in pseudopolynomial time, whereas the usual
route from such an algorithm to an FPTAS is impossible unless
$\mathrm{P}=\mathrm{NP}$.  They also reveal an approximability separation from
the corresponding non-resumable problem with the interval on the first machine.
}

\keywords{Flow shop, Non-availability interval, Dynamic programming, Fully
polynomial-time approximation scheme}

\maketitle

\section{Introduction}

Machine non-availability is unavoidable in many production systems.  Preventive
maintenance, calibration, cleaning, tool replacement, and the reserved use of a
machine may all interrupt an otherwise continuous processing horizon.  When the
interruption is known in advance, it is naturally modeled as a non-availability
interval.  Such an interval is \emph{fixed} when both endpoints are prescribed;
in a flexible model, its starting time is chosen within a given window while its
duration is fixed.  We consider a single fixed interval and refer to it
interchangeably as a maintenance period.

The classical two-machine flow shop consists of jobs that must first be
processed on machine $M_1$ and then on machine $M_2$.  Job $i$ requires $a_i$
time units on $M_1$ and $b_i$ time units on $M_2$, and the objective is to
minimize the makespan, that is, the completion time of the last job.  If both
machines are continuously available, Johnson's rule constructs an optimal
permutation in $O(n\log n)$ time~\cite{Johnson1954}: jobs with $a_i\le b_i$ are
placed first in nondecreasing order of $a_i$, and the remaining jobs are placed
last in nonincreasing order of $b_i$.  A prescribed interruption destroys this
direct solution because the schedule must determine not only an order but also
which jobs use the capacity before the interval.

The treatment of an operation that meets the interval leads to three standard
models.  In the resumable model, the operation continues after the machine
becomes available.  In the semi-resumable model, only part of the work performed
before the interruption is lost.  In the non-resumable model studied here, an
interrupted operation must restart from the beginning.  We use the notation
\[
  F2\mid \mathrm{nr\mbox{-}a}(M_2)\mid C_{\max}.
\]
The second machine is unavailable during $[s,t]$, where $0\le s<t$, while the
first machine remains continuously available.  All processing times are
positive integers, all jobs are available at time zero, and we consider
permutation schedules.  Because processing performed on an interrupted
$M_2$-operation can be discarded without delaying any completion, the essential
decision is a partition: one set of jobs completes on $M_2$ no later than $s$,
and the remaining jobs start on $M_2$ no earlier than $t$.  Johnson's rule still
orders each fixed side optimally, but it does not determine the partition.

Machine scheduling with availability constraints was studied systematically by
Lee~\cite{Lee1996}.  For two-machine flow shops, the foundational papers of Lee
established hardness results, pseudopolynomial algorithms, and constant-factor
approximation guarantees for several interruption models~\cite{Lee1997,Lee1999}.
Kubiak et al. investigated structural and algorithmic aspects of two-machine
flow shops with limited machine availability~\cite{KubiakEtAl2002}, and Mosheiov
et al. later considered flow-shop and open-shop models with a single maintenance
window~\cite{MosheiovEtAl2018}.

For the resumable two-machine flow shop with one fixed interval, Lee obtained
approximation algorithms for intervals on either machine~\cite{Lee1997}.  Breit
improved the guarantee for an interval on $M_2$~\cite{Breit2004}, whereas Cheng
and Wang improved the guarantee for an interval on $M_1$~\cite{ChengWang2000}.
Ng and Kovalyov subsequently gave fully polynomial-time approximation schemes
for both machine locations~\cite{NgKovalyov2004}.  Further polynomial-time
approximation schemes have been developed for related flow-shop models with one
or several availability constraints~\cite{Breit2006,Hadda2012}.  Lee and Kubzin,
Potts, and Strusevich also obtained constant-factor algorithms and approximation
schemes for semi-resumable and more general availability
models~\cite{Lee1999,KubzinPottsStrusevich2009}.

Availability constraints have also been studied extensively in the closely
related two-machine open shop.  Breit, Schmidt, and Strusevich analyzed both the
availability-constraint model and its non-preemptive counterpart
~\cite{BreitSchmidtStrusevich2001,BreitSchmidtStrusevich2003}.  Lorigeon,
Billaut, and Bouquard developed a pseudopolynomial dynamic program
~\cite{LorigeonEtAl2002}; Kubzin et al. established polynomial-time approximation
schemes~\cite{KubzinStrusevichEtAl2006}; and Yuan et al. gave a PTAS for the
non-resumable open-shop variant~\cite{YuanEtAl2022}.  These results are relevant
comparators, although the fixed machine order in a flow shop leads to different
critical-path and partition structures.

The non-resumable setting is more sensitive to the location of the interval.
For an interval on $M_1$, Hadda, Dridi, and Hajri-Gabouj obtained a
$3/2$-approximation~\cite{HaddaDridiHajri2010}.  More recently, Lu, Han, Yuan,
Liu, and Yang improved this ratio to $7/5$ and developed the first fully
polynomial-time approximation scheme for that model~\cite{LuEtAl2026}.  The
corresponding problem with the interval on $M_2$ has a different structure.  A
choice of jobs that finish before $s$ changes both the amount of second-machine
work postponed beyond $t$ and the first-machine offset of every critical path in
the late Johnson subsequence.  These two effects need not move in the same
direction, and exchanging the roles of $a_i$ and $b_i$ does not preserve the
interaction with the fixed interval.

For completeness, a polynomial-time algorithm is a $\rho$-approximation if it
always returns a schedule of makespan at most $\rho$ times the optimum.  A
polynomial-time approximation scheme (PTAS) returns a $(1+\varepsilon)$-
approximate schedule for every fixed $\varepsilon>0$.  It is a fully
polynomial-time approximation scheme (FPTAS) if its running time is polynomial
in both the input length and $1/\varepsilon$.  In many weakly NP-hard problems,
a pseudopolynomial exact algorithm can be converted into an FPTAS by scaling or
trimming.  One of our main conclusions is that this familiar implication fails
for $F2\mid \mathrm{nr\mbox{-}a}(M_2)\mid C_{\max}$.

\noindent\textbf{Our contributions.}
We establish the following three results.

\begin{enumerate}[leftmargin=2.2em,label=\textbf{(C\arabic*)}]
  \item We give a polynomial-time $10/7$-approximation algorithm.  A direct
  implementation runs in $O(n^2\log n)$ time and uses $O(n)$ auxiliary space.

  \item We develop an exact dynamic program with running time $O(nABP^2)$ and
  memory requirement $O(ABP^2)$, where $A=\sum_i a_i$, $B=\sum_i b_i$, and
  $P=A+B$.  Hence the problem is exactly solvable in pseudopolynomial time.

  \item We prove that the problem admits no FPTAS unless
  $\mathrm{P}=\mathrm{NP}$.  The result holds even when the interval has unit
  length, $t-s=1$.
\end{enumerate}

The second and third results together identify an unusual approximability
boundary: pseudopolynomial exact solvability coexists with the nonexistence of an
FPTAS.  Combined with the recent FPTAS for an interruption on $M_1$, they also
show that the location of a fixed non-availability interval fundamentally
changes the approximability of the non-resumable two-machine flow shop.

\noindent\textbf{Organization.}
\Cref{sec:preliminaries} introduces the structural decomposition used throughout
the paper.  \Cref{sec:approximation} presents the $10/7$-approximation algorithm,
\cref{sec:dp} gives the pseudopolynomial exact dynamic program, and
\cref{sec:no-fptas} proves the FPTAS lower bound.  We conclude in
\cref{sec:conclusion}.

\section{Preliminaries and structural decomposition}
\label{sec:preliminaries}

Let $\J=\{1,\ldots,n\}$.  Job $i$ requires $a_i>0$ time units on $M_1$ and
$b_i>0$ time units on $M_2$.  For $Q\subseteq\J$, write
\[
  a(Q)=\sum_{i\in Q}a_i,
  \qquad
  b(Q)=\sum_{i\in Q}b_i.
\]

Partition the jobs into the Johnson-early and Johnson-late classes
\[
  \E=\{i:a_i\le b_i\},
  \qquad
  \V=\{i:a_i>b_i\}.
\]
Fix once and for all a total Johnson order $\prec_{\mathrm J}$: jobs in $\E$ are
sorted by nondecreasing $a_i$, jobs in $\V$ are sorted by nonincreasing $b_i$,
and every early job precedes every late job.  Ties are resolved consistently.
When equal-$b$ late jobs include both a job designated large and one designated
small in \cref{sec:algorithm}, the large job comes first.  For every
$Q\subseteq\J$, let $J(Q)$ denote the restriction of this global order to $Q$,
and let $\CJ(Q)$ be the classical two-machine makespan of $J(Q)$ without the
non-availability interval.  By Johnson's theorem, $\CJ(Q)$ is minimum among all
orders of $Q$.

For $q\in Q$, define the Johnson cut
\[
  \Psi_q(Q)=a(Q_{\preceq q})+b(Q_{\succeq q}),
\]
where the two subsets are taken in the restricted global Johnson order.  The
standard critical-path representation gives
\begin{equation}
  \CJ(Q)=\max_{q\in Q}\Psi_q(Q).
  \label{eq:johnson-cut}
\end{equation}
A maximizer is called a \emph{critical job}.  Relative to a fixed anchor $q$, a
job $i\ne q$ contributes $a_i$ if $i\prec_{\mathrm J}q$ and $b_i$ if
$q\prec_{\mathrm J}i$; the anchor contributes $a_q+b_q$.

An early/late partition is denoted $\J=X\disj Y$: jobs in $X$ complete their
second operations no later than $s$, and jobs in $Y$ are processed on $M_2$ no
earlier than $t$.  The corresponding canonical schedule is
\[
  J(X)\mid[s,t]\mid J(Y).
\]

\begin{lemma}
\label{lem:canonical}
There is an optimal schedule induced by a partition
$\J=X^*\disj Y^*$ such that
\begin{equation}
  \CJ(X^*)\le s,
  \label{eq:early-feasible}
\end{equation}
and both $X^*$ and $Y^*$ occur in their restricted global Johnson orders.
\end{lemma}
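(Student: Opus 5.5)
Starting from an arbitrary optimal permutation schedule $\sigma$, I will transform it step by step into the canonical form $J(X^*)\mid[s,t]\mid J(Y^*)$ without increasing the makespan.

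\textbf{Step 1: define the partition.} In $\sigma$, each $M_2$-operation either completes no later than $s$ or starts no earlier than $t$. Because the schedule is non-resumable, an operation that would intersect $(s,t)$ is effectively restarted at or after $t$. Any work done on it before $s$ is discarded and can be deleted without delaying anything. Let $X$ be the set of jobs whose $M_2$-operation completes by $s$, and let $Y=\J\setminus X$. Since $\sigma$ is a permutation schedule and $M_2$-operations are processed in order, the jobs of $X$ form a prefix of $\sigma$ and the jobs of $Y$ form the complementary suffix.

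\textbf{Step 2: reorder the prefix.} Consider $\sigma$ restricted to $X$. The prefix is a feasible two-machine schedule of $X$ that ends on $M_2$ by $s$, so its classical makespan is at most $s$. By Johnson's theorem, $\CJ(X)\le s$, which proves \eqref{eq:early-feasible}.

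Now replace the prefix by $J(X)$ in semi-active form. I will use the standard fact that Johnson's order minimizes not only the $M_2$ completion time but also, trivially, the $M_1$ completion time, since the latter equals $a(X)$ for every order. So both machine-release times for the suffix are no later than in $\sigma$:
\begin{itemize}
\item $M_1$ becomes free at time $a(X)$;
\item the last $X$-operation on $M_2$ completes by $s$, which is irrelevant because $Y$ starts on $M_2$ only at or after $t$.
\end{itemize}

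\textbf{Step 3: reorder the suffix.} For $Y$, the schedule is a two-machine flow shop in which $M_1$ becomes available at $a(X)$ and $M_2$ becomes available at $t$. I expect this to be the main obstacle: Johnson's theorem must be extended to the case where $M_2$ has a release time. The argument is as follows. Process $Y$ in order $\pi$. Its $M_1$ operations start at $a(X)$ with no idle time, so the $M_2$ completion time is
\[
\max\Bigl(t+b(Y),\ \max_{q}\bigl(a(X)+a(Y_{\preceq q})+b(Y_{\succeq q})\bigr)\Bigr).
\]
The first term does not depend on $\pi$. The second term equals $a(X)$ plus the usual critical-path cut formula \eqref{eq:johnson-cut}, which is minimized by $J(Y)$. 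Hence $J(Y)$ minimizes the suffix makespan.

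Equivalently, one can introduce a dummy job with $a=0$ and $b=t-a(X)$ placed first, or simply apply the adjacent-interchange argument. Because the max formula handles the release times directly, I will use it.

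Combining Steps 1–3, the canonical schedule $J(X^*)\mid[s,t]\mid J(Y^*)$ with $X^*=X$ and $Y^*=Y$ has makespan no larger than that of $\sigma$. It is therefore optimal, and it satisfies every property in \cref{lem:canonical}.
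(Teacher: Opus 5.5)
Your proposal is correct and follows essentially the same route as the paper: delete the discarded pre-$s$ work, take the early/late split, reorder the early block by Johnson's rule to get $\CJ(X^*)\le s$, and reorder the late block by Johnson's rule, since with $M_1$ offset $a(X^*)$ and $M_2$ release time $t$ the only order-dependent term is the critical-path cut. Your explicit release-time makespan formula in Step~3 is the same expression the paper states separately in \cref{lem:partition-makespan}.
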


\begin{proof}
Consider an optimal schedule.  If a second operation begins before $s$ but is
interrupted and restarted after $t$, deleting its processing before $s$ does not
delay any completion.  Hence we may assume that every second operation either
finishes by $s$ or starts at or after $t$.  Let $X^*$ be the jobs of the first
kind and let $Y^*$ contain the remaining jobs.  The jobs of $X^*$ precede those
of $Y^*$ on both machines in the induced permutation schedule.

Replace the order within $X^*$ by $J(X^*)$.  Johnson optimality cannot increase
the completion time of the early block, so it remains at most $s$.  Once the
total first-machine offset $a(X^*)$ and the second-machine release time $t$ are
fixed, replacing the order within $Y^*$ by $J(Y^*)$ minimizes the only
order-dependent two-machine term for the late block.  The replacement therefore
does not increase the makespan.  The resulting schedule has the asserted form.
\end{proof}

\begin{lemma}
\label{lem:partition-makespan}
If $\J=X\disj Y$ and $\CJ(X)\le s$, then the canonical schedule has makespan
\begin{equation}
 C(X,Y)=
 \begin{cases}
   \CJ(X), & Y=\varnothing,\\[2mm]
   \max\{t+b(Y),\ a(X)+\CJ(Y)\}, & Y\ne\varnothing.
 \end{cases}
 \label{eq:partition-makespan}
\end{equation}
\end{lemma}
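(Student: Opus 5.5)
The plan is to compute the completion times of the canonical schedule $J(X)\mid[s,t]\mid J(Y)$ directly. We treat the early block and the late block separately, and for the late block we use the cut representation \eqref{eq:johnson-cut}.

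\emph{Early block.} The jobs of $X$ are processed first on both machines in the order $J(X)$. They start at time $0$ on $M_1$, and $M_2$ is available from time $0$. So their $M_2$-completion times are exactly those of the classical two-machine schedule of $J(X)$, and the last of them finishes at $\CJ(X)$. The hypothesis $\CJ(X)\le s$ guarantees that no operation of $X$ meets $[s,t]$. Hence the schedule of $X$ is feasible and unaffected by the interval. If $Y=\varnothing$, the makespan is $\CJ(X)$, which gives the first case.

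\emph{Late block.} Suppose $Y\ne\varnothing$ and write $J(Y)=(y_1,\dots,y_m)$. On $M_1$ the jobs of $Y$ run consecutively without idle time after $X$, so $y_k$ completes on $M_1$ at $a(X)+\sum_{j\le k}a_{y_j}$. On $M_2$, every job of $Y$ must start at or after $t$, and by then $M_2$ has finished $X$ because $\CJ(X)\le s<t$. So the $M_2$-completion times satisfy the recursion
\[
D_k=\max\Bigl\{D_{k-1},\ a(X)+\sum_{j\le k}a_{y_j}\Bigr\}+b_{y_k},\qquad D_0=t .
\]
Unrolling this recursion in the standard way gives
\[
D_m=\max\Bigl\{t+b(Y),\ \max_{1\le k\le m}\Bigl(a(X)+\sum_{j\le k}a_{y_j}+\sum_{j\ge k}b_{y_j}\Bigr)\Bigr\}.
\]
The first term corresponds to the critical path through the release time $t$. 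The inner maximum is $a(X)+\max_{q\in Y}\Psi_q(Y)=a(X)+\CJ(Y)$ by \eqref{eq:johnson-cut}. Finally, $D_m$ is the makespan: $M_2$ finishes after $M_1$ for the last job, and $D_m\ge t>s\ge\CJ(X)$, so the late block ends after the early block.

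The only point needing care is the unrolling step, which is a routine induction on $k$. The induction hypothesis is that $D_k$ equals the maximum of $t+\sum_{j\le k}b_{y_j}$ and the cut expressions truncated at $k$; adding $b_{y_{k+1}}$ to both arguments of the max keeps this form. The other thing to check is that the interval is not crossed, which holds because every late $M_2$-operation starts at or after $t$ and the early operations finish by $s$.
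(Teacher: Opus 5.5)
Your proposal is correct and follows the same route as the paper. The early block is the classical Johnson schedule of $X$, which finishes by $s$. The late block is handled by unrolling the two-machine recurrence with $M_1$ offset $a(X)$ and $M_2$ release time $t$, and the inner maximum is identified with $\CJ(Y)$ via \eqref{eq:johnson-cut}. You additionally write out the induction and the interval-feasibility checks, which the paper leaves implicit.
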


\begin{proof}
The empty-$Y$ case is immediate.  Otherwise, expand the usual two-machine
recurrence along $J(Y)=(i_1,\ldots,i_\ell)$.  Machine $M_1$ enters this block
with offset $a(X)$, whereas $M_2$ cannot start it before $t$.  Thus
\[
 C(X,Y)=\max\left\{t+b(Y),\
 a(X)+\max_{1\le k\le\ell}
 \left(\sum_{h\le k}a_{i_h}+\sum_{h\ge k}b_{i_h}\right)\right\},
\]
and the inner maximum equals $\CJ(Y)$ by \cref{eq:johnson-cut}.
\end{proof}

When $q\in Y$, it will be convenient to use the anchored expression
\begin{equation}
 F_q(X,Y)=a(X)+a(Y_{\preceq q})+b(Y_{\succeq q}).
 \label{eq:anchored-cut}
\end{equation}
If $q$ is critical in $J(Y)$, then
$F_q(X,Y)=a(X)+\CJ(Y)$.  For another partition containing $q$ on its late side,
$F_q$ remains a valid cut value even if $q$ is not critical there.

\section{A \texorpdfstring{$10/7$}{10/7}-approximation algorithm}
\label{sec:approximation}

\subsection{The algorithm}
\label{sec:algorithm}

Let $L$ contain the six jobs with largest $b_i$ values, or all jobs if $n\le6$,
and let $S=\J\setminus L$.  Call these jobs \emph{large} and \emph{small},
respectively.  Sort the small jobs by the ratio rule
\[
  \rho_i=\frac{b_i}{a_i},
  \qquad
  j_1\prec_{\mathrm R}j_2\prec_{\mathrm R}\cdots\prec_{\mathrm R}j_m,
\]
in nonincreasing order, with ties consistent with the global conventions.
The ratio order is used only to choose membership in a partition.  Every
candidate is actually sequenced by the restricted Johnson order.

Algorithm $H$ enumerates all $2^{|L|}\le64$ partitions
$L=L_X\disj L_Y$.  For one such enumeration, if $\CJ(L_X)>s$, it is discarded.
Otherwise, choose the largest $r\in\{0,\ldots,m\}$ such that
\begin{equation}
 \CJ\bigl(L_X\cup\{j_1,\ldots,j_r\}\bigr)\le s.
 \label{eq:max-prefix}
\end{equation}

If $r<m$, set
\[
 X=L_X\cup\{j_1,\ldots,j_r\},
 \quad
 Y=L_Y\cup\{j_{r+1},\ldots,j_m\},
 \quad
 p=j_{r+1}.
\]
The job $p$ is the \emph{boundary job}.  The algorithm constructs the basic
candidate $\pi^0=(J(X),J(Y))$ and the purge candidate
\[
 X^-=X\setminus\{i\in X\cap S:a_i>b_i\},
 \qquad
 Y^-=\J\setminus X^-.
\]

If $r=m$, define
\[
 S^{\mathrm E}=\{i\in S:a_i\le b_i\},
 \quad
 S^{\mathrm L}=\{i\in S:a_i>b_i\},
\]
and split the small late jobs at ratio $1/2$:
\[
 D^+=\left\{i\in S^{\mathrm L}:\frac{b_i}{a_i}\ge\frac12\right\},
 \qquad
 D^-=S^{\mathrm L}\setminus D^+.
\]
It constructs three candidates:
\begin{align}
 (X_1,Y_1)&=(L_X\cup S,\ L_Y), \label{eq:no-boundary-c1}\\
 (X_2,Y_2)&=(L_X\cup S^{\mathrm E},\ L_Y\cup S^{\mathrm L}),
 \label{eq:no-boundary-c2}\\
 (X_3,Y_3)&=(L_X\cup S^{\mathrm E}\cup D^+,\ L_Y\cup D^-).
 \label{eq:no-boundary-c3}
\end{align}
All sets in \cref{eq:no-boundary-c1,eq:no-boundary-c2,eq:no-boundary-c3}
are sequenced by Johnson's rule.  Finally, $H$ returns the feasible candidate
with minimum makespan.

\begin{algorithm}[t]
\caption{$H$}
\label{alg:h}
\begin{algorithmic}[1]
\State Let $L$ be the six jobs with largest $b_i$ and let $S=\J\setminus L$.
\State Sort $S$ as $j_1,\ldots,j_m$ by nonincreasing $b_i/a_i$.
\For{each partition $L=L_X\disj L_Y$}
  \If{$\CJ(L_X)>s$}
    \State \textbf{continue}
  \EndIf
  \State Find the maximum $r$ satisfying \cref{eq:max-prefix}.
  \If{$r<m$}
    \State Add the basic candidate $(X,Y)$ and the purge candidate $(X^-,Y^-)$.
  \Else
    \State Add the three candidates in
    \cref{eq:no-boundary-c1,eq:no-boundary-c2,eq:no-boundary-c3}.
  \EndIf
\EndFor
\State \Return a generated candidate of minimum makespan.
\end{algorithmic}
\end{algorithm}

Fix the optimal canonical partition $(X^*,Y^*)$ from
\cref{lem:canonical}.  Since every placement of the large jobs is enumerated,
one iteration satisfies
\begin{equation}
  L_X=L\cap X^*,
  \qquad
  L_Y=L\cap Y^*.
  \label{eq:correct-enumeration}
\end{equation}
We call it the \emph{correct enumeration} and analyze only its candidates.

\subsection{The small-job scale and tail discrepancy}

\begin{lemma}
\label{lem:small-bound}
For every $i\in S$,
\[
  b_i\le \frac17\OPT.
\]
\end{lemma}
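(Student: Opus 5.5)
Since $S$ is nonempty only when $n\ge 7$, the set $L$ then contains six jobs with $b$-values at least $\max_{i\in S} b_i$. The plan is to show that the total second-machine work is a lower bound on $\OPT$ up to a correction that still leaves room for seven jobs of size at least $b_i$. The naive bound $\OPT\ge b(\J)\ge 7 b_i$ (six large jobs plus $i$) holds in every schedule, because machine $M_2$ must process all second operations sequentially. Concretely, fix $i\in S$. Then every $\ell\in L$ satisfies $b_\ell\ge b_i$ by the choice of $L$ as the six jobs with largest $b$-values, under any consistent tie-breaking. Hence
\[
  b(L\cup\{i\})\ge 7b_i .
\]

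Next I will check that $\OPT\ge b(\J)$. In any feasible schedule, including the canonical optimal one from \cref{lem:canonical}, machine $M_2$ processes every operation $b_j$ without overlap, starting at time $\ge 0$. So the last completion on $M_2$ is at least $\sum_j b_j=b(\J)$. The interval $[s,t]$ only adds idle time and never reduces this bound. Equivalently, via \cref{lem:partition-makespan}, if $Y^*\neq\varnothing$ then $\OPT\ge t+b(Y^*)$, and $b(X^*)\le \CJ(X^*)\le s<t$, so $\OPT\ge b(X^*)+b(Y^*)$. If $Y^*=\varnothing$, then $\OPT=\CJ(\J)\ge b(\J)$ directly from \cref{eq:johnson-cut} with $q$ the first job.

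Combining the two bounds gives $7b_i\le b(L\cup\{i\})\le b(\J)\le\OPT$, since all $b_j>0$. This yields $b_i\le\OPT/7$.

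The only step needing care is the bookkeeping of the case $n\le 6$, where $S=\varnothing$ and the claim is vacuous, together with the tie convention ensuring $b_\ell\ge b_i$ for all $\ell\in L$. I expect no genuine obstacle here. The lemma is a counting bound used later to control the tail discrepancy created by the boundary job.
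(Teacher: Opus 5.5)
Your proposal is correct and matches the paper's proof: the six large jobs together with $i$ give $b(\J)\ge 7b_i$, and $\OPT\ge b(\J)$ because $M_2$ must process every second operation. Your extra check of $\OPT\ge b(\J)$ via \cref{lem:partition-makespan} is valid but unnecessary, and no ``correction'' term ever appears, so you can drop the opening sentence of your plan.
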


\begin{proof}
There are six distinct large jobs whose $b$ values are at least $b_i$.
Therefore $b(\J)\ge7b_i$.  Every feasible schedule must complete all second
operations, so $\OPT\ge b(\J)\ge7b_i$.
\end{proof}

Set
\begin{equation}
  \beta=\frac17\OPT.
  \label{eq:beta}
\end{equation}
Thus every small job has $b_i\le\beta$.

Suppose the correct enumeration has a boundary job $p$.  Put
$Q=X\cup\{p\}$.  Maximality of the prefix gives
\begin{equation}
  \CJ(X)\le s<\CJ(Q).
  \label{eq:boundary-infeasible}
\end{equation}
Define the excess late load
\begin{equation}
 E_T=b(Y)-b(Y^*)=b(X^*)-b(X).
 \label{eq:tail-error}
\end{equation}
If
\[
 U=(X^*\setminus X)\cap S,
 \qquad
 W=(X\setminus X^*)\cap S,
\]
then the large parts cancel under the correct enumeration and
\begin{equation}
  E_T=b(U)-b(W).
  \label{eq:tail-error-uw}
\end{equation}

\begin{lemma}
\label{lem:tail-discrepancy}
If the boundary job exists, then $E_T\le2\beta$.
\end{lemma}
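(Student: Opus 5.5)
The plan is to compare the two early sets $Q=X\cup\{p\}$ and $X^*$ through a critical job of $J(Q)$. We know $\CJ(Q)>s$ by \cref{eq:boundary-infeasible}, and $\CJ(X^*)\le s$ by \cref{eq:early-feasible}. Both sets contain the same large part $L_X$ under \cref{eq:correct-enumeration}, so every difference between them consists of small jobs. Each small job has $b_i\le\beta$ by \cref{lem:small-bound}. Since $E_T=b(U)-b(W)$ by \cref{eq:tail-error-uw}, it suffices to show that $b(X^*)$ exceeds $b(X)$ by at most two small-job contributions. I expect one to come from $p$ and the other from the critical job, or from one rounding step in the ratio order.

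Let $q$ be a critical job of $J(Q)$, so $s<\Psi_q(Q)$. I split according to whether $q$ is Johnson-early or Johnson-late.

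\emph{Case $q\in\E$.} Every job before $q$ in the Johnson order is early, so $a_i\le b_i$ for those jobs and hence $\Psi_q(Q)\le b(Q_{\prec q})+a_q+b(Q_{\succeq q})$.

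If $q$ is small, then $a_q\le b_q\le\beta$. This gives $s<b(Q)+\beta\le b(X)+2\beta$, using $b_p\le\beta$. On the other hand $b(X^*)\le\CJ(X^*)\le s$, and subtracting yields $E_T=b(X^*)-b(X)<2\beta$.

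If $q$ is large, then $q\in L_X\subseteq X^*$. I would use the anchored cut of $X^*$ at $q$ instead of the crude bound $b(X^*)\le s$. The inequality $\Psi_q(X^*)\le s<\Psi_q(Q)$ compares $a$-prefixes and $b$-suffixes around the common anchor $q$; the common large jobs and the anchor term $a_q$ cancel. Small jobs of $U$ or $W$ lying before $q$ are early, so their $a$-contribution is at most their $b$-contribution. This bounds $b(U)-b(W)$ by $b_p$ plus at most one further small term.

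\emph{Case $q\in\V$.} Every job after $q$ is late, so $\Psi_q(Q)\le a(Q)+b_q$. This is an $a$-load violation. If $q$ is small this gives $a(X)+a_p>s-\beta$. If $q$ is large, I again anchor at $q\in X^*$ and obtain $a(Q_{\preceq q})$ versus $a(X^*_{\preceq q})$ with an equal $b$-suffix treatment up to small terms.

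To turn $a$-excess into $b$-excess I would use the ratio order. Every small job of $W$ precedes every job of $U$ in $\prec_{\mathrm R}$, so $b(W)/a(W)\ge\rho_p\ge b(U)/a(U)$. Since $a(X^*)\le s$ and $X$ nearly saturates $s$ in $a$, I get $a(U)\le a(W)+a_p+(\text{small})$, and this is a fractional-knapsack exchange. The extra mass is priced at ratio at most $\rho_p$. Because the relevant jobs are late and small, $b<a$, so the leftover $b$-mass is at most $b_p+\beta\le2\beta$.

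The main obstacle is the case where the critical job $q$ is large. Its own processing times are not bounded by $\beta$, so the cancellation between the cuts $\Psi_q(Q)$ and $\Psi_q(X^*)$ has to be done carefully. In particular, small jobs of $U$ and $W$ fall on both sides of $q$ in the Johnson order, and the ratio order interleaves with the Johnson order only partially. I would first try to make this cancellation work with the tie-breaking convention that large jobs precede equal-$b$ small late jobs, because that convention seems designed exactly for this step.
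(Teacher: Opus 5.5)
Your small-critical-job cases are correct and are instances of the paper's Cases 1 and 3. For small early $q$ you get $E_T<b_p+a_q\le2\beta$. For small late $q$ the ratio exchange gives $E_T<b_p+\rho_pb_q$. There $\rho_p<1$ holds automatically, because every small job of $Q$ has ratio at least $\rho_p$; state this explicitly instead of saying the relevant jobs are late.

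The gap is the large critical job. You flag it as the obstacle but do not resolve it, and the mechanism you sketch fails. In the comparison $\Psi_q(Q)>s\ge\Psi_q(X^*)$, a job $u\in U$ preceding $q$ contributes only $a_u$, and $a_u\le b_u$ points the wrong way. The comparison controls $a(U_{\prec q})$, not $b(U_{\prec q})$, so nothing bounds $b(U)-b(W)$ until you price via the ratio order ($b_u\le\rho_pa_u$, $b_w\ge\rho_pa_w$). Take the $q$-cut weights to be $a_i$ for jobs before $q$ and $b_i$ for jobs after it. Such a pricing closes with error $b_p$ only if the profit/weight densities of $W\cup\{p\}$ dominate those of $U$, and this breaks for some large $q$:
\begin{itemize}
\item If $q$ is early with $a_q\le\beta$, small early jobs with $a_i>a_q$ follow $q$ and get density $1$, while early $U$-jobs before $q$ keep density $\rho_u>1$.
\item If $q$ is late with $b_q\le\beta$, small late jobs with $b_i>b_q$ precede $q$ with density $\rho_i<1$, while the others follow with density $1$.
\end{itemize}
Your ``equal $b$-suffix treatment up to small terms'' is also unfounded, since $U$ and $W$ may contain many small late jobs after $q$.

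The missing idea is to split by the scale of the critical job $c$, not by membership in $L$.
\begin{itemize}
\item \emph{Low-scale cases.} If $c$ is early with $a_c\le\beta$, or late with $b_c\le\beta$ and $\rho_p<1$, your crude bounds $\Psi_c(Q)\le b(Q)+a_c$ and $\Psi_c(Q)\le a(Q)+b_c$ work verbatim even when $c$ is large.
\item \emph{High-scale cases.} If $c$ is early with $a_c>\beta$, or late with $b_c>\beta$, then $c$ is large, so $c\in L_X\subseteq X^*$. Every small early job precedes $c$ and every small late job follows it. Hence each small job contributes exactly $\min\{a_i,b_i\}$ to the $c$-cut of $X$, $Q$ and $X^*$ alike. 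With profit $b_i$, the density $\max\{\rho_i,1\}$ is nonincreasing along the ratio order. So $X\cap S$ is a greedy fractional-knapsack prefix for the residual capacity ($s$ minus the common large contribution), $p$ is the first item that does not fit, and $X^*\cap S$ is a feasible integral packing. This gives $E_T\le b_p$.
\item \emph{Late $c$ with $\rho_p\ge1$.} Your $a$-load route is useless here because $\rho_pb_c$ is not controlled by $\beta$. Instead, every small job of $Q$ is early, so $c$ is large and all of $W\cup\{p\}$ precede $c$. Let $\delta=1$ if $p\in X^*$ and $\delta=0$ otherwise, and let $U^-$ and $U^+$ be the jobs of $U\setminus\{p\}$ before and after $c$. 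The cut comparison reads $a(W)+(1-\delta)a_p>a(U^-)+b(U^+)$. Multiply by $\rho_p$ and use $b(W)\ge\rho_pa(W)$, $b(U^-)\le\rho_pa(U^-)$ and $b(U^+)\le\rho_pb(U^+)$ to obtain $E_T<b_p$.
\end{itemize}
The large-before-small tie rule you hoped to exploit plays no role in this lemma. When $b_c>\beta$, the separation $b_c>\beta\ge b_i$ is already strict.
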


\begin{proof}
Let $c$ be a critical job of $J(Q)$, so
$\Psi_c(Q)=\CJ(Q)>s$.  We distinguish five exhaustive cases.

\smallskip
\noindent\emph{Case 1: $c$ is early and $a_c\le\beta$.}
Every job before $c$ is early.  Hence
\[
 \CJ(Q)=\Psi_c(Q)\le b(Q)+a_c=b(X)+b_p+a_c.
\]
Since $b(X^*)\le\CJ(X^*)\le s$,
\[
 E_T=b(X^*)-b(X)\le s-b(X)<b_p+a_c\le2\beta.
\]

\smallskip
\noindent\emph{Case 2: $c$ is early and $a_c>\beta$.}
The job $c$ is large, because a small early job satisfies
$a_c\le b_c\le\beta$.  Thus $c\in L_X\subseteq X^*$.  Every small early job
precedes $c$, and every small late job follows $c$.  Consequently each small job
$i$ contributes
\[
  w_i=\min\{a_i,b_i\}
\]
to the fixed $c$-cut.  Give item $i$ profit $b_i$.  Its density is $b_i/a_i$
for an early job and $1$ for a late job; among density-one jobs use the ratio
order as tie-breaking.  After subtracting the fixed large-job contribution from
$s$, \cref{eq:boundary-infeasible} says that $X\cap S$ is the complete greedy
prefix and $p$ is the first item that does not fit.  Feasibility of $X^*$ makes
$X^*\cap S$ an integral packing for the same residual capacity.  The fractional
knapsack bound therefore yields
\[
 b(X^*\cap S)\le b(X\cap S)+b_p,
\]
so $E_T\le b_p\le\beta$.

\smallskip
\noindent\emph{Case 3: $c$ is late, $\rho_p<1$, and $b_c\le\beta$.}
The ratio-prefix property implies
\[
 u\in U\Rightarrow\rho_u\le\rho_p,
 \qquad
 w\in W\Rightarrow\rho_w\ge\rho_p.
\]
Using cancellation of the large parts and $a(X^*)\le s$,
\begin{align}
 E_T
 &=b(U)-b(W)\\
 &\le\rho_p\bigl(a(U)-a(W)\bigr)
 =\rho_p\bigl(a(X^*)-a(X)\bigr)
 \le\rho_p(s-a(X)).
 \label{eq:case3-ratio}
\end{align}
Since $c$ is late, every job after $c$ is late, and
\[
 \CJ(Q)=\Psi_c(Q)\le a(Q)+b_c=a(X)+a_p+b_c.
\]
Together with \cref{eq:boundary-infeasible}, this gives
$s-a(X)<a_p+b_c$.  Substitution in \cref{eq:case3-ratio} yields
\[
 E_T<\rho_p(a_p+b_c)=b_p+\rho_p b_c\le b_p+b_c\le2\beta.
\]

\smallskip
\noindent\emph{Case 4: $c$ is late and $\rho_p\ge1$.}
Here $p$ is early.  Every small job already in $X$ has ratio at least
$\rho_p$, and is therefore early.  Thus the late critical job $c$ is large and
belongs to $L_X\subseteq X^*$.

Let $\delta=\ind[p\in X^*]$.  If $p\in X^*$, remove it from $U$ and call the
result $U_0$; otherwise let $U_0=U$.  Split
\[
 U^- =\{u\in U_0:u\prec_{\mathrm J}c\},
 \qquad
 U^+ =\{u\in U_0:c\prec_{\mathrm J}u\}.
\]
Comparing the same $c$-cut in $Q$ and $X^*$ gives
\begin{equation}
 a(W)+(1-\delta)a_p>a(U^-)+b(U^+).
 \label{eq:case4-cut}
\end{equation}
The ratio order gives
\[
 b(W)\ge\rho_p a(W),
 \qquad
 b(U^-)\le\rho_p a(U^-),
\]
and $b(U^+)\le\rho_p b(U^+)$ because $\rho_p\ge1$.  Multiplying
\cref{eq:case4-cut} by $\rho_p$ and using $\rho_pa_p=b_p$ gives
\[
 b(U^-)+b(U^+)<b(W)+(1-\delta)b_p.
\]
As $b(U)=\delta b_p+b(U^-)+b(U^+)$, we obtain
$E_T=b(U)-b(W)<b_p\le\beta$.

\smallskip
\noindent\emph{Case 5: $c$ is late, $\rho_p<1$, and $b_c>\beta$.}
The job $c$ is large and lies in $L_X\subseteq X^*$.  Moreover, every small
early job precedes $c$, while every small late job follows $c$: late jobs are
ordered by nonincreasing $b_i$, and $b_c>\beta\ge b_i$ for a small late job.
Thus every small job again contributes $w_i=\min\{a_i,b_i\}$ to the fixed
$c$-cut.  The same residual-capacity knapsack argument as in Case~2 gives
$E_T\le b_p\le\beta$.

The five cases cover every type and scale of the critical job and every value of
$\rho_p$, proving the claim.
\end{proof}

\subsection{Complementary candidates in the boundary case}

\Cref{lem:tail-discrepancy} already bounds the first term of
\cref{eq:partition-makespan}.  \Cref{lem:basic-purge} handles the case in which a
large job is responsible for the second term and the boundary lies in the late
part of the ratio order.

\begin{lemma}
\label{lem:basic-purge}
Suppose that the boundary job $p$ exists, that $q\in L_Y$ is critical in
$J(Y)$, and that $\rho_p<1$.  Then
\[
 \min\{C(X,Y),C(X^-,Y^-)\}\le\frac43\OPT.
\]
\end{lemma}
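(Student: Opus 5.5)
The plan is to bound the two terms of \cref{eq:partition-makespan} separately for each candidate, and then show that at least one candidate has both terms at most $\frac43\OPT$. We work in the correct enumeration \cref{eq:correct-enumeration}, so $q\in L_Y\subseteq Y^*$ and $q$ is a valid anchor for $(X^*,Y^*)$. By \cref{lem:partition-makespan},
\[
\OPT\ge\max\{t+b(Y^*),\ F_q(X^*,Y^*)\}.
\]
Since $\rho_p<1$, the boundary job $p$ is late. Moreover, every small job outside $X$ has ratio at most $\rho_p<1$, so every small job in $Y$ is late.

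\emph{Step 1 (the tail term).} For the basic candidate, \cref{eq:tail-error} and \cref{lem:tail-discrepancy} give
\[
t+b(Y)=t+b(Y^*)+E_T\le\OPT+2\beta=\tfrac97\OPT.
\]
Let $R=X\cap S\cap\V$ be the small late jobs that the purge removes. For the purge candidate the analogous bound is
\[
t+b(Y^-)=t+b(Y)+b(R)\le\OPT+2\beta+b(R).
\]

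\emph{Step 2 (the anchored term).} By \cref{eq:anchored-cut}, relative to the anchor $q$ every job contributes $a_i$, except that jobs of the late side that follow $q$ contribute $b_i$ and $q$ contributes $a_q+b_q$. The large jobs sit identically in both partitions, so
\[
F_q(X,Y)-F_q(X^*,Y^*)
\]
is a sum over small jobs that switch sides. Each $u\in U$ placed after $q$ contributes $b_u-a_u<0$, because it is late. Each $w\in W$ placed after $q$ in $J(Y^*)$ contributes $a_w-b_w$, which is positive only when $w$ is late, that is, $w\in R\setminus X^*$. Hence
\[
a(X)+\CJ(Y)=F_q(X,Y)\le\OPT+\sum_{w\in R\setminus X^*}(a_w-b_w).
\]
For the purge candidate, the jobs of $X^-\setminus X^*$ are small early jobs, whose contribution is at most $0$. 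The jobs of $X^*\setminus X^-$ are $U$ together with $R\cap X^*$, and these are late small jobs contributing at most $0$. Moreover, $q$ is large with $b_q\ge\beta\ge b_i$ for the small late jobs, so $q$ precedes them all in $J(Y^-)$, and $F_q$ is a valid cut value there. What remains to check is that $q$ (or another large job, handled identically) stays critical in $J(Y^-)$. Small late jobs following $q$ contribute only $b_i$, and I would argue that any critical small job reduces to the tail term. Granting this,
\[
a(X^-)+\CJ(Y^-)\le\OPT.
\]

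\emph{Step 3 (trading off, the main obstacle).} It now suffices to show
\[
\min\Bigl\{\sum_{w\in R\setminus X^*}(a_w-b_w),\ \ 2\beta+b(R)\Bigr\}\le\tfrac13\OPT.
\]
The first quantity alone is not controlled, and $b(R)$ is also not obviously small. The key is to link them through the capacity before $s$ and the ratio order: jobs of $R$ have ratio at least $\rho_p$, and $X$ is a feasible maximal ratio prefix.

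First, I would sharpen the purge tail bound by redoing the exchange argument of Case~3 of \cref{lem:tail-discrepancy} for $X^-$, aiming for
\[
t+b(Y^-)\le\OPT+b(R\setminus X^*)+2\beta,
\]
so that only jobs of $R$ not in $X^*$ matter. Write $R'=R\setminus X^*\subseteq W$. The basic anchored excess is then at most $a(R')-b(R')$, and the purge tail excess is at most $b(R')+2\beta$.

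Second, I would bound $a(R')$ through the sum of the two first-machine loads: $a(X^*)\le s$ and $R'\subseteq X\setminus X^*$, whose first-machine work must appear in $\OPT$ after $a(X^*)$. I expect an inequality of the form
\[
a(R')\le\OPT-t-b(Y^*)+\text{(small terms)},
\]
or simply $a(R')\le\OPT-a(X^*)$.

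Finally, a case split decides between the candidates: if $b(R')\le\frac13\OPT-2\beta=\frac1{21}\OPT$, the purge candidate wins; otherwise I would use $a(R')-b(R')$ together with the load bound to show the basic candidate wins. Getting the constants to close exactly at $\frac43$ is the delicate part. If the naive split fails, I would instead take a convex combination of the two makespan bounds with weights chosen to cancel $b(R')$.
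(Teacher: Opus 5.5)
\textbf{There is a genuine gap.} Your Steps~1--2 roughly track the paper's two candidate bounds. Step~3 is where the lemma is actually decided, and you leave it open; the route you sketch for it cannot reach $\tfrac43$.

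First, the purge tail is charged to the wrong jobs. Since $\rho_p<1$, every small early job precedes $p$ in ratio order, so $X^-=L_X\cup S^{\mathrm E}$ and $Y^-=L_Y\cup S^{\mathrm L}$. Comparing directly with $Y^*$, rather than passing through $Y$, gives $t+b(Y^-)\le\OPT+b(Z)$. Here $Z=X^*\cap S\cap\V=U\disj(R\cap X^*)$ is the set of small late jobs that the optimum completes before $s$. Jobs of $R'=R\setminus X^*$ are already late in the optimum and cost nothing. So your intended sharpening $t+b(Y^-)\le\OPT+b(R')+2\beta$ is false in general. For example, if $X=X^*$ then $R'=\varnothing$ but $t+b(Y^-)=t+b(Y^*)+b(R)$, which exceeds $\OPT+2\beta$ whenever $\OPT=t+b(Y^*)$ and $b(R)>2\beta$.

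Second, and more fundamentally, even the correct bounds do not suffice on their own. Write $\Delta(T)=a(T)-b(T)$. The basic excess is at most $H=\Delta(R')-\Delta(U)$, and the purge excess is at most $\max\{b(Z),\beta\}$. Since $H+b(Z)\le a(R')+a(Z)\le a(\J)\le\OPT$, two quantities drawing on one resource only give $\min\{H,b(Z)\}\le\OPT/2$, i.e.\ a $\tfrac32$ guarantee. Bounding $a(R')$ by $\OPT-a(X^*)$, or averaging the two bounds, does not change this.

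The missing idea is a third, disjoint term: the first-machine time $a_q$ of the large critical job. Because $q\notin S$, one has $H+b(Z)+a_q\le a(\J)\le\OPT$. Independently, the basic anchored excess is below $a_q$:
\begin{itemize}
\item If $q$ is late, $\CJ(Y)\le a(Y)+b_q$ and $b_q<a_q$ give $a(X)+\CJ(Y)\le a(\J)+b_q<\OPT+a_q$.
\item If $q$ is early and the excess is positive, then $H>0$. With $\phi=1/\rho_p-1$, the ratio bounds $\Delta(R')\le\phi\,b(R')$ and $\Delta(U)\ge\phi\,b(U)$ force $b(U)<b(W)$, i.e.\ $E_T<0$. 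Then $\CJ(Y)\le b(Y)+a_q$, $a(X)\le s<t$ and $\OPT\ge t+b(Y^*)$ give an excess below $a_q$.
\end{itemize}
This is exactly why the $-\Delta(U)$ term you discarded in Step~2 must be kept.

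The proof then splits on $a_q$. If $a_q\le\OPT/3$, the basic candidate is within $\tfrac43$, since its tail excess is $E_T\le2\beta<\OPT/3$. Otherwise $H+b(Z)<\tfrac23\OPT$, so either $H<\OPT/3$ (basic candidate) or $b(Z)<\OPT/3$ (purge candidate).

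Two smaller points. A small critical job in $J(Y^-)$ is late and yields only $\OPT+\beta$, not $\OPT$; that weaker bound suffices. Also, $b_q\ge\beta$ need not hold; only $b_q\ge b_i$ for small $i$, together with the tie rule, is needed.
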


\begin{proof}
Retain the sets $U=(X^*\setminus X)\cap S$ and
$W=(X\setminus X^*)\cap S$.  Since $U$ lies after the boundary and
$\rho_p<1$, all jobs in $U$ are Johnson-late.

Because $q$ is critical for the basic late set,
\begin{equation}
 a(X)+\CJ(Y)=F_q(X,Y).
 \label{eq:basic-anchor}
\end{equation}
The correct enumeration places $q$ in $Y^*$, so
\begin{equation}
 F_q(X^*,Y^*)\le a(X^*)+\CJ(Y^*)\le\OPT.
 \label{eq:opt-anchor}
\end{equation}
For a job before $q$, moving it between the two sides does not alter its
contribution to the anchored expression: it contributes $a_i$ through either
$a(X)$ or the $a$-prefix of $Y$.  Only jobs after $q$ produce a difference.
Consequently,
\begin{equation}
 F_q(X,Y)-F_q(X^*,Y^*)
 =\sum_{\substack{w\in W\\q\prec_{\mathrm J}w}}(a_w-b_w)
 -\sum_{\substack{u\in U\\q\prec_{\mathrm J}u}}(a_u-b_u).
 \label{eq:exact-anchor-difference}
\end{equation}

Let $W^{\mathrm L}=\{w\in W:a_w>b_w\}$ and
$\Delta(R)=\sum_{i\in R}(a_i-b_i)$.  Every $u\in U$ follows $q$: this is
automatic if $q$ is early, and if $q$ is late it follows from $q\in L$, the
top-six definition, and the large-before-small tie rule.  Early jobs in $W$
that follow $q$ have nonpositive difference and may be discarded from an upper
bound.  Thus
\begin{equation}
 F_q(X,Y)-F_q(X^*,Y^*)
 \le H:=\Delta(W^{\mathrm L})-\Delta(U).
 \label{eq:H}
\end{equation}
The basic candidate's anchored critical-path excess is at most $H$.

For the purge candidate, define
\begin{equation}
 Z=X^*\cap S\cap\V,
 \label{eq:Z}
\end{equation}
the small late jobs that the optimum places before maintenance.  Purging all
small late jobs eliminates the risk from $W^{\mathrm L}$ but sends $Z$ to the
late side.  Moving a small early job from $Y^*$ to $X^-$ cannot increase a cut
anchored at a common large late-side job, and moving a small late job from
$X^*$ to $Y^-$ cannot increase such a cut either.  If the new critical job is
small, it is late: because $\rho_p<1$, every small early job precedes $p$ in the
ratio order and remains in $X^-$.  Its $b$ value is at most $\beta$, and the
late-job cut bound $\CJ(Y^-)\le a(Y^-)+b_q$ applies.  The tail can increase only
by moving $Z$.  Therefore
\begin{equation}
 C(X^-,Y^-)-\OPT\le\max\{b(Z),\beta\}.
 \label{eq:purge-error}
\end{equation}

All jobs of $U$ belong to $Z$; write $Z=U\disj Z_0$.  Direct expansion gives
\[
 a(W^{\mathrm L})+a(Z)-[H+b(Z)]
 =b(W^{\mathrm L})+2\Delta(U)+\Delta(Z_0)\ge0.
\]
The sets $W^{\mathrm L}$, $Z$, and $\{q\}$ are disjoint, whence
\begin{equation}
 H+b(Z)+a_q\le a(\J)\le\OPT.
 \label{eq:resource-basic-purge}
\end{equation}

We also need an independent bound involving $a_q$.  If $q$ is late, then
$\CJ(Y)\le a(Y)+b_q$, so
\begin{equation}
 a(X)+\CJ(Y)\le a(\J)+b_q<\OPT+a_q.
 \label{eq:q-independent}
\end{equation}
If $q$ is early and the basic critical-path term already does not exceed
$\OPT$, there is nothing to prove.  Otherwise \cref{eq:H} implies $H>0$.
Put $\phi=1/\rho_p-1>0$.  Ratio ordering yields
\[
 a_w-b_w\le\phi b_w\quad(w\in W^{\mathrm L}),
 \qquad
 a_u-b_u\ge\phi b_u\quad(u\in U).
\]
Hence $H>0$ implies $b(U)-b(W)<0$.  Since an early critical job satisfies
$\CJ(Y)\le b(Y)+a_q$ and $Y^*\ne\varnothing$ in the boundary case,
\begin{align*}
 a(X)+\CJ(Y)-\OPT
 &\le s+b(Y)+a_q-[t+b(Y^*)]\\
 &=E_T+a_q-(t-s)<a_q.
\end{align*}
Thus the basic critical-path excess is below $a_q$ in both cases.

If $a_q\le\OPT/3$, this bound and
$E_T\le2\beta=2\OPT/7<\OPT/3$ show that the basic candidate is within $4/3$.
If $a_q>\OPT/3$, \cref{eq:resource-basic-purge} gives
$H+b(Z)<2\OPT/3$.  Therefore $H<\OPT/3$ or $b(Z)<\OPT/3$.  In the first case
the basic candidate is within $4/3$ by \cref{eq:H}; in the second, the purge
candidate is within $4/3$ by \cref{eq:purge-error} and
$\beta<\OPT/3$.
\end{proof}

\subsection{Three candidates without a boundary}

If no boundary job exists in the correct enumeration, then
\begin{equation}
  \CJ(L_X\cup S)\le s.
  \label{eq:no-boundary}
\end{equation}
All three early sets in
\cref{eq:no-boundary-c1,eq:no-boundary-c2,eq:no-boundary-c3} are subsets of
$L_X\cup S$ and are therefore feasible before maintenance.

\begin{lemma}
\label{lem:no-boundary}
Under the correct enumeration and \cref{eq:no-boundary}, at least one of the
three no-boundary candidates has makespan at most $4\OPT/3$.
\end{lemma}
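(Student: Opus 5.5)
The plan is to compare each of the three candidates directly with the optimal canonical partition $(X^*,Y^*)$ of \cref{lem:canonical}, using the correct enumeration \cref{eq:correct-enumeration}. We bound separately the two terms of \cref{eq:partition-makespan}: the tail $t+b(Y)$ and the critical-path term $a(X)+\CJ(Y)$.

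Feasibility is immediate from \cref{eq:no-boundary}. If a candidate has an empty late side, its makespan is $\CJ(X)\le s\le\OPT$, so we may assume $Y\neq\varnothing$. Introduce
\[
R=Y^*\cap S\cap\V,\qquad Z=X^*\cap S\cap\V,\qquad \Delta(Q)=\sum_{i\in Q}(a_i-b_i),
\]
and note that $R$ and $Z$ are disjoint subsets of $S^{\mathrm L}$.

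The key step is an excess bound for each candidate, of the form
\[
C(X_k,Y_k)\le\OPT+\max\{\text{tail excess},\ \text{path excess},\ \beta\}.
\]
\begin{itemize}
\item \emph{Tail.} The late side of $(X_1,Y_1)$ is $L_Y\subseteq Y^*$, so its tail is at most $t+b(Y^*)\le\OPT$. Candidate 2 sends $Z$ to the late side, giving tail excess at most $b(Z)$. Candidate 3 sends only $Z\cap D^-$, giving tail excess at most $b(Z\cap D^-)$.
\item \emph{Path, large critical job.} Let the critical job $q$ of the late side be large, so $q\in L_Y\subseteq Y^*$. Compare $F_q$ with $F_q(X^*,Y^*)\le\OPT$ exactly as in \cref{eq:exact-anchor-difference}. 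Jobs preceding $q$ contribute $a_i$ either way. A small early job that leaves $Y^*$ and follows $q$ changes its contribution from $b_i$ to $a_i\le b_i$. A small late job moved from $X^*$ to the late side changes from $a_i$ to $b_i<a_i$; here every small late job follows a large $q$, by the $b$-ordering and the large-first tie rule. The only positive change comes from small late jobs of $R$ that are placed early, each contributing $a_i-b_i$. This gives path excess at most $\Delta(R)$ for candidate 1, $0$ for candidate 2, and $\Delta(R\cap D^+)$ for candidate 3.
\item \emph{Path, small critical job.} If the critical job is small, it is late, because all small early jobs are on the early side in every candidate. Then $\CJ(Y)\le a(Y)+b_c$, so $a(X)+\CJ(Y)\le a(\J)+\beta\le\OPT+\beta$ by \cref{lem:small-bound}.
\end{itemize}

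The case analysis then closes with $\beta=\OPT/7<\OPT/3$:
\begin{itemize}
\item If $\Delta(R)\le\OPT/3$, candidate 1 is within $4/3$.
\item If $b(Z)\le\OPT/3$, candidate 2 is within $4/3$.
\item Otherwise $\Delta(R)>\OPT/3$ and $b(Z)>\OPT/3$. Since every job of $S^{\mathrm L}$ has $a_i>b_i$, we get $a(R)\ge\Delta(R)>\OPT/3$ and $a(Z)>b(Z)>\OPT/3$. Disjointness gives $a(R)+a(Z)\le a(\J)\le\OPT$, so each of $a(R)$ and $a(Z)$ is below $2\OPT/3$. For $i\in D^+$ we have $a_i\le2b_i$, hence $a_i-b_i\le a_i/2$ and $\Delta(R\cap D^+)\le a(R)/2<\OPT/3$. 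For $i\in D^-$ we have $b_i<a_i/2$, hence $b(Z\cap D^-)\le a(Z)/2<\OPT/3$. Thus candidate 3 is within $4/3$.
\end{itemize}

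The main obstacle is the path comparison for a large critical job. One must check carefully that every small late job follows $q$ in the global Johnson order, both when $q$ is late (ties on $b$ resolved large-first) and when $q$ is early (trivially). One must also check that small early jobs which follow an early large $q$ never increase the cut. If some small job could precede $q$ while changing sides, the clean excess bounds $\Delta(R)$ and $\Delta(R\cap D^+)$ would fail. I would verify this exactly as in the proof of \cref{lem:basic-purge}.
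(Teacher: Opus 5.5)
Your proposal is correct and follows essentially the paper's argument. It uses the same three excess bounds: your $\Delta(R)$, $\max\{b(Z),\beta\}$ and $\max\{\Delta(R\cap D^+),b(Z\cap D^-),\beta\}$ are the paper's $x+y$, $\max\{z+w,\beta\}$ and $\max\{x,w,\beta\}$, and you close with the same ratio-$1/2$ resource argument based on $a(\J)\le\OPT$. The differences are minor: you skip the normalization $S^{\mathrm E}\subseteq X^*$, which is harmless because stray small early jobs of $Y^*$ only lower the anchored cuts and the tail, and you treat Candidate~1 by the anchored comparison rather than by job deletion.

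One small repair is needed. When $Y^*=\varnothing$, the inequality $s\le\OPT$ can fail, so justify the empty-late-side case by noting that then $X=\J$ and $\CJ(\J)\le\OPT$. You should also observe that $Y^*=\varnothing$ forces $R=\varnothing$, so your first branch selects Candidate~1 before the tail bounds for Candidates~2 and~3, which require $Y^*\neq\varnothing$, are ever used.
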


\begin{proof}
We first normalize the optimal partition.  If a small early job
$e\in S^{\mathrm E}\cap Y^*$ is moved into $X^*$, its new early set remains a
subset of $L_X\cup S$ and hence is feasible by \cref{eq:no-boundary}.  For a
remaining anchor $q\in Y^*\setminus\{e\}$, the move changes no contribution if
$e\prec_{\mathrm J}q$ and changes $b_e$ to $a_e\le b_e$ if
$q\prec_{\mathrm J}e$.  The tail also decreases by $b_e$.  Repetition yields an
optimal partition satisfying
\begin{equation}
 S^{\mathrm E}\subseteq X^*.
 \label{eq:early-normalization}
\end{equation}

If $Y^*=\varnothing$, then $X^*=\J$.  The correct enumeration has $L_X=L$ and
Candidate~1 is exactly the classical Johnson schedule, so it is optimal.  Assume
henceforth that $Y^*\ne\varnothing$.  By
\cref{eq:correct-enumeration,eq:early-normalization}, there is a unique
decomposition
\begin{equation}
 X^*=L_X\cup S^{\mathrm E}\cup Z,
 \qquad
 Y^*=L_Y\cup W,
 \qquad
 Z\disj W=S^{\mathrm L}.
 \label{eq:normalized-opt}
\end{equation}

Candidate~1 moves every job of $W$ from $Y^*$ to its early side.  Its tail does
not increase.  Deleting a late job $i$ from a Johnson sequence decreases each
remaining anchored cut by at least $b_i$: it removes $a_i>b_i$ from an
$a$-prefix or exactly $b_i$ from a $b$-suffix.  Thus, when the late set is
nonempty,
\[
 \CJ(L_Y)\le\CJ(Y^*)-b(W),
\]
while $a(X_1)=a(X^*)+a(W)$.  The empty late-set case is directly optimal.
In either case,
\begin{equation}
 C(X_1,Y_1)-\OPT\le\Delta(W),
 \qquad
 \Delta(W)=a(W)-b(W).
 \label{eq:candidate1-error}
\end{equation}

Candidate~2 moves every job of $Z$ to the late side.  Its tail increases by at
most $b(Z)$.  A cut anchored at a common large late-side job cannot increase:
for $z\prec_{\mathrm J}q$ the contribution remains $a_z$, and for
$q\prec_{\mathrm J}z$ it changes from $a_z$ to $b_z<a_z$.  If the new critical
job is small, it is late and has $b_q\le\beta$, so
$a(X_2)+\CJ(Y_2)\le a(\J)+b_q\le\OPT+\beta$.  Hence
\begin{equation}
 C(X_2,Y_2)-\OPT\le\max\{b(Z),\beta\}.
 \label{eq:candidate2-error}
\end{equation}

For Candidate~3, decompose
\[
 W^+=W\cap D^+,
 \quad W^-=W\cap D^- ,
 \quad Z^+=Z\cap D^+,
 \quad Z^-=Z\cap D^- ,
\]
and put
\begin{equation}
 x=\Delta(W^+),
 \quad y=\Delta(W^-),
 \quad z=b(Z^+),
 \quad w=b(Z^-).
 \label{eq:xyzw}
\end{equation}
Then \cref{eq:candidate1-error,eq:candidate2-error} become
\begin{align}
 C(X_1,Y_1)-\OPT&\le x+y,
 \label{eq:c1-xyzw}\\
 C(X_2,Y_2)-\OPT&\le\max\{z+w,\beta\}.
 \label{eq:c2-xyzw}
\end{align}
Candidate~3 moves $W^+$ early and $Z^-$ late.  Its tail increases by at most
$w$.  A common large anchor incurs at most the positive discrepancy
$\Delta(W^+)=x$, whereas a new small critical job belongs to $D^-$, is late,
and has $b_q\le\beta$.  Therefore
\begin{equation}
 C(X_3,Y_3)-\OPT\le\max\{x,w,\beta\}.
 \label{eq:c3-xyzw}
\end{equation}

The ratio-$1/2$ split converts these four error quantities into first-machine
workload.  Specifically,
\[
 a(W^+)\ge2x,
 \quad a(W^-)\ge y,
 \quad a(Z^+)\ge z,
 \quad a(Z^-)>2w.
\]
The four sets are pairwise disjoint, and $a(\J)\le\OPT$, so
\begin{equation}
 2x+y+z+2w\le\OPT.
 \label{eq:xyzw-resource}
\end{equation}

Suppose all three candidates exceeded $4\OPT/3$.  Since
$\beta=\OPT/7<\OPT/3$, \cref{eq:c1-xyzw,eq:c2-xyzw,eq:c3-xyzw} would imply
\[
 x+y>\frac13\OPT,
 \qquad
 z+w>\frac13\OPT,
 \qquad
 x>\frac13\OPT\ \text{or}\ w>\frac13\OPT.
\]
If $x>\OPT/3$, then $2x+z+w>\OPT$; if $w>\OPT/3$, then
$x+y+2w>\OPT$.  Either conclusion contradicts
\cref{eq:xyzw-resource}.  At least one candidate is therefore within $4/3$.
\end{proof}

\subsection{Approximation guarantee}

\begin{theorem}
\label{thm:approximation}
Algorithm $H$ is a $10/7$-approximation for
$F2\mid\mathrm{nr\mbox{-}a}(M_2)\mid C_{\max}$.
\end{theorem}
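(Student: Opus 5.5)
We restrict attention to the correct enumeration \cref{eq:correct-enumeration}, since $H$ returns a candidate no worse than any candidate generated there. Feasibility is immediate: every early set satisfies $\CJ(\cdot)\le s$, either by \cref{eq:max-prefix} (the basic set $X$, and the purge set $X^-\subseteq X$, since removing jobs from a Johnson sequence cannot increase any cut) or by \cref{eq:no-boundary}. For the makespans we use \cref{lem:partition-makespan}. Two cases are distinguished.

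\textbf{No boundary job.} Here \cref{lem:no-boundary} gives a candidate with makespan at most $\tfrac43\OPT\le\tfrac{10}{7}\OPT$, so this case is finished.

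\textbf{Boundary job $p$ exists.} We bound $C(X,Y)=\max\{t+b(Y),\,a(X)+\CJ(Y)\}$ term by term.
\begin{itemize}
\item \emph{Tail term.} Since $Y^*\ne\varnothing$, we have $t+b(Y^*)\le\OPT$, so \cref{lem:tail-discrepancy} gives $t+b(Y)=t+b(Y^*)+E_T\le\OPT+2\beta=\tfrac97\OPT$.
\item \emph{Critical-path term.} Let $q$ be a critical job of $J(Y)$. There are three subcases.
\begin{enumerate}
\item If $q\in L_Y$ and $\rho_p<1$, then \cref{lem:basic-purge} yields $\min\{C(X,Y),C(X^-,Y^-)\}\le\tfrac43\OPT$ directly. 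That bound already covers both terms of the better candidate.
\item If $q$ is small, then $b_q\le\beta$. If moreover $q$ is late, the late-job cut gives $a(X)+\CJ(Y)\le a(X)+a(Y)+b_q\le a(\J)+\beta\le\tfrac87\OPT$. If $q$ is small and early, we use $\CJ(Y)\le b(Y)+a_q$ with $a_q\le b_q\le\beta$. This gives $a(X)+\CJ(Y)\le a(X)+b(Y)+\beta$, and we must control $a(X)+b(Y)$. Using $\CJ(X)\le s$, we have $a(X)\le s$. Also $b(Y)=b(Y^*)+E_T$ and $t+b(Y^*)\le\OPT$, so $s+b(Y)\le\OPT+E_T$, and the term is at most $\OPT+2\beta+\beta=\tfrac{10}{7}\OPT$. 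This subcase is exactly where the constant $10/7$ comes from.
\item If $q\in L_Y$ and $\rho_p\ge1$, then $p$ is early and every small job in $X$ is early. Moreover $U=(X^*\setminus X)\cap S$ lies after $p$ in ratio order. We compare $F_q(X,Y)$ with $F_q(X^*,Y^*)\le\OPT$ via \cref{eq:exact-anchor-difference}. The difference is $\sum_{w\in W,\,q\prec w}(a_w-b_w)-\sum_{u\in U,\,q\prec u}(a_u-b_u)$. Each $w\in W$ is early, so the first sum is nonpositive. The second sum consists of terms that are negative for early $u$ and positive for late $u$; since it is subtracted, the positive late terms only help, while the early ones contribute at most $\sum_{u}(b_u-a_u)\le b(U)$. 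We intend to bound this excess by $E_T+b(W)$-type quantities, or more simply by the early-anchor argument used inside \cref{lem:basic-purge}. If $q$ is early, then $a(X)+\CJ(Y)-\OPT\le E_T+a_q-(t-s)$, which fails when $a_q$ is large. If $q$ is late, then $a(X)+\CJ(Y)\le a(\J)+b_q$.
\end{enumerate}
\end{itemize}

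The main obstacle is subcase~3 with a large $q$ and $\rho_p\ge1$. We would first show that when $\rho_p\ge1$ every job of $U$ is Johnson-early and precedes or follows $q$ in a controlled way, so that the anchored difference is at most $b_p\le\beta$-type knapsack slack, mirroring Case~4 of \cref{lem:tail-discrepancy}. Failing that, we would fall back on the decomposition $\OPT\ge a(X^*)+\CJ(Y^*)$ together with $E_T\le2\beta$, so that every term is at most $\OPT+3\beta=\tfrac{10}{7}\OPT$.
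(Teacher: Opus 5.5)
Your skeleton matches the paper's: the correct enumeration, \cref{lem:no-boundary} when no boundary job exists, the tail bound $\tfrac97\OPT$ from \cref{lem:tail-discrepancy}, and the same treatment of small critical jobs and of $q\in L_Y$ with $\rho_p<1$. However, subcase~3 ($q\in L_Y$, $\rho_p\ge1$) is left unproved, and neither of your proposed repairs works.

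The claim that every job of $U$ is Johnson-early is false. $U\subseteq\{j_{r+1},\ldots,j_m\}$ consists of small jobs whose ratio is at most $\rho_p$, and these can have ratio below $1$. The fallback ``$\OPT\ge a(X^*)+\CJ(Y^*)$ together with $E_T\le2\beta$'' also bounds nothing here. $E_T=b(U)-b(W)$ controls only the tail. The quantity you must control is the anchored difference in \cref{eq:exact-anchor-difference}, a signed sum of $a_i-b_i$ over jobs after $q$, and $E_T$ does not dominate it. Likewise, the bound $a(X)+\CJ(Y)\le a(\J)+b_q$ is useless for a large late $q$, since $b_q$ can be of order $\OPT$.

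The missing idea is positional. You already saw that the $W$-sum is nonpositive and that only \emph{early} $u\in U$ following $q$ can hurt. What remains is to show that no such $u$ exists whenever your additive bound fails:
\begin{itemize}
\item If $q$ is late, every job after $q$ in $\prec_{\mathrm J}$ is late. So every subtracted $U$-term is positive, and $F_q(X,Y)\le F_q(X^*,Y^*)\le\OPT$, using $q\in L_Y\subseteq Y^*$.
\item If $q$ is early with $a_q>\beta$, every small early job satisfies $a_i\le b_i\le\beta<a_q$. Early jobs are sorted by nondecreasing $a_i$, so all of them strictly precede $q$. Again only late $U$-jobs follow $q$, and $a(X)+\CJ(Y)\le\OPT$.
\item If $q$ is early with $a_q\le\beta$, your own additive bound $E_T+a_q-(t-s)\le3\beta$ gives $\tfrac{10}{7}\OPT$.
\end{itemize}
With these three cases added, the argument closes exactly as in the paper.
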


\begin{proof}
Consider the correct enumeration.  First suppose that the boundary job exists.
Then $Y^*\ne\varnothing$, and \cref{lem:tail-discrepancy} gives
\begin{equation}
 t+b(Y)\le t+b(Y^*)+2\beta
 \le\OPT+2\beta=\frac97\OPT.
 \label{eq:unified-tail}
\end{equation}
It remains to control
\[
 M=a(X)+\CJ(Y).
\]
Let $q$ be critical in $J(Y)$.  A large critical job must lie in $L_Y$.

If $q$ is a small late job, then
$\CJ(Y)\le a(Y)+b_q$, and hence
\[
 M\le a(\J)+b_q\le\OPT+\beta=\frac87\OPT.
\]
If $q$ is a small early job, then
$\CJ(Y)\le b(Y)+a_q$ and $a_q\le b_q\le\beta$.  Using $a(X)\le s$ and
$\OPT\ge t+b(Y^*)$,
\[
 M-\OPT
 \le s+b(Y)+a_q-[t+b(Y^*)]
 =E_T+a_q-(t-s)
 \le3\beta,
\]
so $M\le10\OPT/7$.

Now let $q\in L_Y$.  If $\rho_p<1$, the stronger
\cref{lem:basic-purge} applies.  Suppose $\rho_p\ge1$.  Every job of
$W=(X\setminus X^*)\cap S$ has ratio at least one and is early.  If $q$ is
late, all such jobs precede $q$.  In the exact anchored difference
\cref{eq:exact-anchor-difference}, the positive $W$-sum is empty, and every
$u\in U$ following $q$ is late, so its subtracted difference is positive.
Thus
\[
 F_q(X,Y)\le F_q(X^*,Y^*)\le\OPT,
\]
and $M\le\OPT$.

Finally suppose that $q\in L_Y$ is early.  If $a_q\le\beta$, the same additive
calculation as for a small early critical job gives $M\le\OPT+3\beta$.  If
$a_q>\beta$, every small early job satisfies
$a_i\le b_i\le\beta<a_q$ and therefore strictly precedes $q$.  In particular,
every job in $W$ precedes $q$.  A job of $U$ that follows $q$ must be late, so
the anchored difference is again nonpositive and $M\le\OPT$.

Combining every branch with \cref{eq:unified-tail}, one generated boundary
candidate has makespan at most $10\OPT/7$.

If no boundary job exists, \cref{eq:no-boundary} holds and
\cref{lem:no-boundary} gives a candidate of makespan at most
$4\OPT/3<10\OPT/7$.  Since $H$ returns the best candidate over all
enumerations, the theorem follows.
\end{proof}

\begin{theorem}[Running time]
\label{thm:h-runtime}
Algorithm $H$ runs in polynomial time; a direct implementation takes
$O(n^2\log n)$ time and $O(n)$ auxiliary space apart from the output schedule.
\end{theorem}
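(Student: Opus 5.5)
We account for the cost of Algorithm $H$ step by step. The running time is a constant number of global preprocessing steps plus, for each of the at most $2^{|L|}\le 64$ enumerations, a bounded number of candidate constructions.

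\textbf{Global preprocessing.} Selecting the six jobs with largest $b_i$ takes $O(n)$ time; a full sort in $O(n\log n)$ is also acceptable. Sorting $S$ by the ratio $\rho_i=b_i/a_i$ takes $O(n\log n)$. We also compute the global Johnson order $\prec_{\mathrm J}$ once, in $O(n\log n)$, including the large-before-small tie rule. After this, for any subset $Q$ given as a membership mask, the restricted sequence $J(Q)$ is obtained by one filtered scan of the global order. Then $\CJ(Q)$ follows in $O(n)$ time, either from the standard two-machine completion-time recurrence or directly from \cref{eq:johnson-cut} via prefix sums of $a$ and suffix sums of $b$.

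\textbf{Cost per enumeration.} Checking $\CJ(L_X)\le s$ costs $O(1)$, since $|L_X|\le 6$. The critical step is finding the maximum $r$ in \cref{eq:max-prefix}. Feasibility of prefixes is monotone: $\CJ$ cannot increase when jobs are deleted, because every cut of a subset is dominated by a cut of the superset. So we can test $r=0,1,\ldots$ and stop at the first violation. Each test costs $O(n)$, giving $O(n^2)$ per enumeration. Binary search would give $O(n\log n)$, but the linear scan already fits within the stated bound. If one insists on re-sorting each candidate set instead of filtering the global order, each test costs $O(n\log n)$ and the scan costs $O(n^2\log n)$, which is exactly the claimed bound. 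I would present this direct version, since it matches the statement. Next, the candidates: in the boundary case, the basic and purge candidates are formed in $O(n)$ by membership masks. In the no-boundary case, the sets $S^{\mathrm E}$, $S^{\mathrm L}$, $D^{\pm}$ are formed in $O(n)$. Each candidate is evaluated through \cref{eq:partition-makespan} in $O(n\log n)$. Feasibility of the early side is guaranteed: by maximality in the boundary case, by \cref{eq:no-boundary} in the other case, and the purge set $X^-\subseteq X$ is covered by subset monotonicity again. The final minimum is taken over at most $5\cdot 64$ values. Altogether the time is $64\cdot O(n^2\log n)=O(n^2\log n)$.

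\textbf{Space.} At any moment we store only the two sorted orders, a constant number of membership masks, the prefix and suffix arrays, and the best candidate found so far. All of these are $O(n)$.

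The only nontrivial point is the monotonicity $\CJ(Q')\le\CJ(Q)$ for $Q'\subseteq Q$. It justifies both the linear prefix scan and the feasibility of $X^-$. From \cref{eq:johnson-cut}, each cut $\Psi_q(Q')$ is at most $\Psi_q(Q)$, since restricting the global order only removes nonnegative terms. I expect this to be the main step to state carefully. Everything else is routine bookkeeping.
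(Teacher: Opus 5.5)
Your proposal is correct and follows essentially the same accounting as the paper: at most $64$ large-job partitions, the ratio and Johnson orders computed once, a direct scan over at most $n$ prefixes with an $O(n)$ Johnson-makespan evaluation each, and a constant number of candidate evaluations, so $O(n^2\log n)$ is a conservative bound. Your monotonicity remark $\CJ(Q')\le\CJ(Q)$ for $Q'\subseteq Q$ is a valid extra justification for stopping at the first violation and for feasibility of $X^-$, though the paper's plain scan of all prefixes does not need it.
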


\begin{proof}
There are at most $2^6=64$ large-job partitions.  The ratio and Johnson orders
are computed once.  For each enumeration, a direct scan tests at most $n$
prefixes, and a Johnson makespan can be evaluated in $O(n)$ time once the
restricted order is known.  Candidate evaluation has the same cost.  This gives
the stated conservative bound.  Incremental maintenance of cut values reduces
the repeated work, but is not needed for polynomiality.
\end{proof}

\section{A pseudopolynomial exact algorithm}
\label{sec:dp}

The canonical decomposition reduces exact optimization to choosing a subset
$X$ that fits before $s$; its complement $Y$ is scheduled after $t$.  The
partition nevertheless cannot be summarized by a single load, because the
objective contains both $b(Y)$ and $a(X)+\CJ(Y)$, while feasibility depends on
$\CJ(X)$.  We record precisely these four quantities.

Renumber the jobs $1,\ldots,n$ in the fixed global Johnson order and let
\[
 A_i=\sum_{h=1}^i a_h,
 \qquad
 A=A_n,
 \qquad
 B=\sum_{h=1}^n b_h,
 \qquad
 P=A+B.
\]
For $i\in\{0,\ldots,n\}$, define a Boolean state
\[
 R_i(x,y,p,q)
\]
to be true if the first $i$ jobs can be partitioned into subsequences $X_i$ and
$Y_i$ such that
\[
 a(X_i)=x,
 \qquad b(Y_i)=y,
 \qquad \CJ(X_i)=p,
 \qquad \CJ(Y_i)=q.
\]
Both subsequences inherit the global Johnson order.  Set
$R_0(0,0,0,0)=\mathrm{true}$ and all other layer-zero states to false.

For each true state $R_{i-1}(x,y,p,q)$, create two successor states.  Assigning
job $i$ to the early side gives
\begin{equation}
 R_i\bigl(x+a_i,\ y,\ \max\{p,x+a_i\}+b_i,\ q\bigr)=\mathrm{true}.
 \label{eq:dp-X}
\end{equation}
Assigning it to the late side gives
\begin{equation}
 R_i\bigl(x,\ y+b_i,\ p,\ \max\{q,A_i-x\}+b_i\bigr)=\mathrm{true}.
 \label{eq:dp-Y}
\end{equation}
In \cref{eq:dp-Y}, the total first-machine time of the late subsequence after
adding job $i$ is $A_i-x$.

At the last layer, inspect every true state with $p\le s$.  Its schedule value is
\begin{equation}
 \Gamma(x,y,p,q)=
 \begin{cases}
   p, & y=0,\\[1mm]
   \max\{t+y,x+q\}, & y>0.
 \end{cases}
 \label{eq:dp-objective}
\end{equation}
The positivity of the $b_i$ implies that $y=0$ exactly when the late set is
empty.

\begin{algorithm}[H]
\caption{Pseudopolynomial exact dynamic program}
\label{alg:dp}
\begin{algorithmic}[1]
\State Sort the jobs in the fixed global Johnson order and compute $A_i$.
\State Initialize $R_0(0,0,0,0)=\mathrm{true}$.
\For{$i=1$ to $n$}
  \For{each true state $R_{i-1}(x,y,p,q)$}
    \State Insert the early successor in \cref{eq:dp-X}.
    \State Insert the late successor in \cref{eq:dp-Y}.
  \EndFor
\EndFor
\State Choose a true $R_n(x,y,p,q)$ with $p\le s$ minimizing
       $\Gamma(x,y,p,q)$.
\State Recover $X,Y$ through predecessor pointers and return
       $J(X)\mid[s,t]\mid J(Y)$.
\end{algorithmic}
\end{algorithm}

\begin{theorem}[Exactness of the dynamic program]
\label{thm:dp-correct}
Algorithm~\ref{alg:dp} returns an optimal schedule.
\end{theorem}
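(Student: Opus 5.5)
The argument has two halves: every true state is realized by some partition with exactly the recorded parameters, and every partition is realized by some true state. The final selection then ranges over exactly the feasible canonical partitions, each with its true makespan. \Cref{lem:canonical,lem:partition-makespan} then give optimality.

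\emph{Invariant.} By induction on $i$, $R_i(x,y,p,q)$ is true if and only if some partition $\{1,\ldots,i\}=X_i\disj Y_i$ satisfies $a(X_i)=x$, $b(Y_i)=y$, $\CJ(X_i)=p$ and $\CJ(Y_i)=q$, with the convention $\CJ(\varnothing)=0$.

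The key fact is that the jobs are numbered in the global Johnson order. Hence $J(X_i\cup\{i\})$ is $J(X_i)$ followed by job $i$, and likewise on the late side. The classical two-machine recurrence then gives
\[
 \CJ(X_i\cup\{i\})=\max\{\CJ(X_i),\,a(X_i)+a_i\}+b_i .
\]
This holds because the $M_1$ completion of $i$ in the early subsequence is $a(X_i)+a_i$, and the $M_2$ completion of the previous job is $\CJ(X_i)$. Equivalently, it follows from \cref{eq:johnson-cut}: cuts anchored before $i$ gain $b_i$, and the new anchor contributes $a(X_{i-1})+a_i+b_i$.

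For the late side, the $M_1$ load of $Y_{i-1}\cup\{i\}$ is $A_i-x$, since the first $i$ jobs split their $a$-load between the two sides. The same recurrence therefore gives \cref{eq:dp-Y}.

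Both directions of the induction now follow. Every true successor comes from a true predecessor realized by a partition, extended by placing job $i$. Conversely, any partition of the first $i$ jobs restricts to a partition of the first $i-1$ jobs, which is realized by a true state by induction, and that state's successor is the required one.

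\emph{Range of the table.} All four coordinates are nonnegative integers bounded by $A$, $B$, $P$, $P$ respectively, so the table is well defined. The bound $\CJ\le A+B$ suffices here; this is also the complexity remark.

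\emph{Conclusion.} At layer $n$, the true states with $p\le s$ correspond exactly to partitions $\J=X\disj Y$ with $\CJ(X)\le s$. For each such state, $\Gamma$ equals $C(X,Y)$ by \cref{lem:partition-makespan}; the case $y=0$ coincides with $Y=\varnothing$ because every $b_i>0$.

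By \cref{lem:canonical}, some optimal schedule is canonical with a partition $(X^*,Y^*)$ satisfying \cref{eq:early-feasible}. That partition corresponds to some inspected state, so $\min\Gamma\le\OPT$.

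Conversely, any inspected state yields the feasible schedule $J(X)\mid[s,t]\mid J(Y)$ with makespan $\Gamma$. Here the early block finishes by $s$ and the late block starts on $M_2$ no earlier than $t$, so $\min\Gamma\ge\OPT$.

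Predecessor pointers recover a partition attaining the minimum, and that partition is the one returned.

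\emph{Main obstacle.} The only delicate step is justifying that appending job $i$ at the end of each subsequence is consistent with the restricted Johnson order. This holds precisely because of the renumbering. The tie-breaking conventions are fixed globally, so restrictions of the same total order agree, and the recurrence for $\CJ$ applies without re-sorting.
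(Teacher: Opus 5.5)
Your proposal is correct and follows the paper's proof essentially step for step: a two-directional induction on the state invariant (the Johnson renumbering lets job $i$ be appended at the end of its subsequence, with late-side $M_1$ load $A_i-x$), then \cref{lem:canonical,lem:partition-makespan} to show that the minimum of $\Gamma$ over the admitted terminal states equals $\OPT$. The only blemish is cosmetic: where you write $J(X_i\cup\{i\})$ and $J(X_i)$, you mean $X_{i-1}$, matching your later $a(X_{i-1})+a_i+b_i$.
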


\begin{proof}
We first prove the state invariant by induction on $i$.  It is immediate for
$i=0$.  Assume it holds at layer $i-1$.  Appending job $i$ to $X_i$ changes its
first-machine load from $x$ to $x+a_i$.  In a two-machine flow shop, the second
operation of the appended job finishes at
\[
 \max\{p,x+a_i\}+b_i,
\]
which is exactly \cref{eq:dp-X}.  If job $i$ is appended to $Y_i$, that
subsequence has first-machine load $A_i-x$ after the append, so its completion
time becomes $\max\{q,A_i-x\}+b_i$, as in \cref{eq:dp-Y}.  Thus every generated
state has the claimed interpretation.

Conversely, take any partition of the first $i$ jobs and inspect the membership
of job $i$.  Removing it yields a partition represented at layer $i-1$ by the
induction hypothesis, and the appropriate recurrence reconstructs its four
statistics.  Hence every attainable partition is represented.

By \cref{lem:canonical}, some optimal schedule has a partition $(X^*,Y^*)$ in
restricted Johnson order and satisfies $\CJ(X^*)\le s$.  The invariant places
its statistics in a terminal state admitted by Algorithm~\ref{alg:dp}; by
\cref{lem:partition-makespan}, \cref{eq:dp-objective} equals its makespan.  Every
terminal state considered by the algorithm also induces a feasible schedule of
the value in \cref{eq:dp-objective}.  Minimization over these states is therefore
exact.
\end{proof}

\begin{theorem}[Pseudopolynomial complexity]
\label{thm:dp-runtime}
For integral input data, Algorithm~\ref{alg:dp} can be implemented in
$O(nABP^2)$ time and $O(ABP^2)$ memory.  It is therefore pseudopolynomial.
\end{theorem}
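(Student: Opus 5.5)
We bound the number of distinct states in each layer, bound the work spent per state, and then account for the terminal scan and the traceback.

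\emph{Ranges of the four coordinates.} In every layer $i$ the coordinates satisfy
\[
0\le x\le A_i\le A,\qquad 0\le y\le B,\qquad 0\le p\le P,\qquad 0\le q\le P.
\]
The bounds on $p$ and $q$ hold because the Johnson makespan of any subsequence is at most the sum of all its processing times, so $\CJ(X_i),\CJ(Y_i)\le A+B=P$. All quantities are integers, since every $a_i$ and $b_i$ is a positive integer and the recurrences \cref{eq:dp-X,eq:dp-Y} use only addition and maximum. Hence a layer contains at most
\[
(A+1)(B+1)(P+1)^2=O(ABP^2)
\]
states. We store a layer as a Boolean array indexed by $(x,y,p,q)$.

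\emph{Work per layer.} To build layer $i$ we scan the whole array of layer $i-1$. For each true entry we compute the two successors of \cref{eq:dp-X,eq:dp-Y} in $O(1)$ time. This uses the prefix sum $A_i$, which is precomputed once after the Johnson sort in $O(n\log n)$ time. We then set the corresponding entries of the layer-$i$ array. Scanning and initializing a layer costs $O(ABP^2)$, so the $n$ layers cost $O(nABP^2)$ in total. The final scan over layer $n$, which checks $p\le s$ and evaluates $\Gamma$ in $O(1)$ per state, adds another $O(ABP^2)$.

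\emph{Memory, the main obstacle.} The point needing care is keeping memory at $O(ABP^2)$ while still recovering $X$ and $Y$, since storing $n$ full layers with predecessor pointers would cost $O(nABP^2)$. The Boolean recursion itself only needs two layers at a time, which is $O(ABP^2)$. For reconstruction we avoid pointers and backtrack by recomputation. Suppose the current state at layer $i$ is $(x,y,p,q)$. Its predecessor under the early move must have the same $y$ and $q$ and $x'=x-a_i$. Its predecessor under the late move must have the same $x$ and $p$ and $y'=y-b_i$. In each case only the old $p'$ (respectively $q'$) is unknown, and it can be found by testing $O(P)$ candidates against the recurrence.

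This test needs reachability information for layer $i-1$. We could rerun the forward pass up to layer $i-1$ for each backtracking step. That keeps $O(ABP^2)$ space, but it costs an extra factor $n$ in time, giving $O(n^2ABP^2)$. So either we accept this time in exchange for the claimed memory bound, or we interpret memory in the theorem as the working table only. In the latter case we store one parent bit per state per layer, which is $O(nABP^2)$ bits but only $O(ABP^2)$ words if the bits are packed. Alternatively, a Hirschberg-style divide and conquer on the layer index gives $O(ABP^2)$ space at a cost of only a $\log n$ factor in time.

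I would present the version the theorem most plausibly intends: two rolling layers give the stated $O(ABP^2)$ memory, and schedule recovery is treated with one of the remarks above. Since $O(nABP^2)$ is polynomial in $n$ and in the magnitudes of the integer data, the algorithm is pseudopolynomial.
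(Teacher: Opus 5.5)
Your counting argument (coordinate ranges, $(A+1)(B+1)(P+1)^2$ states per layer, two $O(1)$ transitions, two rolling layers) is exactly the paper's proof. It correctly yields $O(nABP^2)$ time and $O(ABP^2)$ memory for the table and the optimal value. The paper then handles reconstruction in one sentence. You are right that recovering $X,Y$ is the delicate point, but none of your options establishes both bounds for the full Algorithm~\ref{alg:dp}. Recomputation costs $O(n^2ABP^2)$ time. Reading ``memory'' as the working table changes the statement. The packed-bits remark is false: $n$ parent bits per state fit in $O(1)$ words only if the word length $w$ is $\Omega(n)$. In a standard word RAM you store $\Theta(nABP^2/w)$ words, which is not $O(ABP^2)$.

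The Hirschberg route does close the gap, and without the $\log n$ factor, once you supply two facts.

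\emph{Backward sweep.} A state of layer $i-1$ can reach a fixed target state iff one of its two successors can. So a forward sweep from the source to the middle layer, a backward sweep from the target to the middle layer, and an intersection give a middle state on a valid path. You then recurse on both halves.

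\emph{Shrinking work per block.} Using the invariants $x\le p$ and $A_i-x\le q$, induction on \cref{eq:dp-X,eq:dp-Y} shows the following. Every state reachable at layer $j$ from a state $(x_\sigma,y_\sigma,p_\sigma,q_\sigma)$ at layer $l<j$ satisfies $0\le x-x_\sigma\le A_j-A_l$, $0\le y-y_\sigma\le\sum_{l<h\le j}b_h$, and $0\le p-p_\sigma,\ q-q_\sigma\le\sum_{l<h\le j}(a_h+b_h)$. Both transitions also map the layer-$(j-1)$ box into the layer-$j$ box. Hence a block of $\ell$ jobs with sums $A_k,B_k,P_k$ costs $O(\ell A_kB_kP_k^2)$; the data are positive integers, so the $+1$ terms are absorbed. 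Since $\sum_kA_kB_kP_k^2\le ABP^2$, recursion depth $d$ costs $O\bigl((n/2^d)ABP^2\bigr)$. The geometric sum gives $O(nABP^2)$ time with $O(ABP^2)$ working space plus an $O(\log n)$ recursion stack, which is what the theorem asserts.
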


\begin{proof}
The ranges are $0\le x\le A$, $0\le y\le B$, and
$0\le p,q\le P$.  There are $O(ABP^2)$ possible states per layer and two
constant-time transitions per reachable state.  Rolling arrays keep only two
layers.  Predecessor information for reconstruction may be stored for reachable
states or recovered by a second pass.  Since $A$, $B$, and $P$ are numerical
rather than binary input lengths, the bound is pseudopolynomial.
\end{proof}

\begin{remark}
The dynamic program is most naturally implemented as a sparse dictionary of
reachable tuples, with dominance filtering among states that share selected
coordinates.  Such engineering can greatly reduce practical memory use.  The
full four-dimensional formulation is retained here because its invariant and
exactness proof are immediate and do not rely on unproved dominance claims.
\end{remark}

\section{No FPTAS unless \texorpdfstring{$\mathrm{P}=\mathrm{NP}$}{P=NP}}
\label{sec:no-fptas}

A fully polynomial-time approximation scheme for a minimization problem returns,
for every $\varepsilon>0$, a feasible solution of value at most
$(1+\varepsilon)\OPT$ in time polynomial in the input length and $1/\varepsilon$.
To exclude such a scheme, it is not enough to show an additive YES/NO gap: the
gap must remain an inverse-polynomial fraction of the threshold.  We construct
exactly such a gap from \textsc{Partition}~\cite{GareyJohnson1979}.

An instance of \textsc{Partition} consists of positive integers
$x_1,\ldots,x_m$ with
\[
 \sum_{i=1}^m x_i=2B
\]
and asks whether some subset sums to $B$.  We may assume $m\ge4$ and
$x_i<B$ for all $i$.  Indeed, instances with fewer than four numbers can be
solved directly; an item equal to $B$ gives an immediate YES instance; and an
item larger than $B$ gives an immediate NO instance after the total-sum check.
These polynomial-time preprocessing cases do not affect NP-completeness of the
remaining family.

\subsection{Forcing an exact cardinality}

Append $m$ zero values and define
\[
 y_i=
 \begin{cases}
   x_i, & 1\le i\le m,\\
   0, & m+1\le i\le2m.
 \end{cases}
\]
Let
\begin{equation}
 \Lambda=(m+1)B,
 \qquad
 c_i=\Lambda+y_i,
 \qquad
 R=m\Lambda+B.
 \label{eq:cardinality-lift}
\end{equation}

\begin{lemma}
\label{lem:cardinality-lift}
There is a set $S\subseteq\{1,\ldots,2m\}$ with
$\sum_{i\in S}c_i=R$ if and only if the original
\textnormal{\textsc{Partition}} instance is
a YES instance.  Every such $S$ has $|S|=m$.
\end{lemma}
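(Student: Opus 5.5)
First I establish the cardinality claim, because both directions of the equivalence depend on it. Let $S\subseteq\{1,\ldots,2m\}$ satisfy $\sum_{i\in S}c_i=R$, and put $k=|S|$. Then
\[
 \sum_{i\in S}c_i=k\Lambda+\sum_{i\in S}y_i,
 \qquad
 0\le\sum_{i\in S}y_i\le\sum_{i=1}^{2m}y_i=2B.
\]
Matching this with $R=m\Lambda+B$ yields $(k-m)\Lambda=B-\sum_{i\in S}y_i$. The right-hand side lies in $[-B,B]$. Since $\Lambda=(m+1)B>B$ (recall $m\ge4$ and $B\ge1$), the only multiple of $\Lambda$ in that interval is $0$. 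Hence $k=m$, and in addition $\sum_{i\in S}y_i=B$.

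For the forward direction, suppose $S$ attains $R$. By the identity just derived, $\sum_{i\in S}y_i=B$. The zero padding contributes nothing to this sum, so $T=S\cap\{1,\ldots,m\}$ satisfies $\sum_{i\in T}x_i=B$. Thus the \textsc{Partition} instance is a YES instance.

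For the converse, let $T\subseteq\{1,\ldots,m\}$ satisfy $\sum_{i\in T}x_i=B$. Extend $T$ to a set of size exactly $m$ by adding $m-|T|$ indices from the padding block $\{m+1,\ldots,2m\}$. This is always possible, since $|T|\le m$ and the padding block contains $m$ indices. The added indices have $y_i=0$, so the resulting set $S$ has $|S|=m$ and $\sum_{i\in S}y_i=B$. Therefore $\sum_{i\in S}c_i=m\Lambda+B=R$.

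The only step that needs care is the divisibility/range argument forcing $k=m$. It depends on the strict inequality $\Lambda>B$, which follows because the factor $m+1$ exceeds $1$. No other part of the argument is delicate.
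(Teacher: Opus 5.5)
Your proof is correct and follows the paper's argument: the paper rules out $k\le m-1$ and $k\ge m+1$ by comparing $(m-1)\Lambda+2B$ and $(m+1)\Lambda$ with $R$. That is the same fact you package as $(k-m)\Lambda\in[-B,B]$ with $\Lambda>B$. Your zero-padding converse is also the paper's.
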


\begin{proof}
Let $k=|S|$.  If $k\le m-1$, then
\[
 \sum_{i\in S}c_i
 \le(m-1)\Lambda+2B<R.
\]
If $k\ge m+1$, then
$\sum_{i\in S}c_i\ge(m+1)\Lambda>R$.  Hence equality requires $k=m$.
For $k=m$ it is equivalent to
\[
 m\Lambda+\sum_{i\in S}y_i=m\Lambda+B,
\]
or $\sum_{i\in S}y_i=B$.  Discarding the appended zeros gives exactly a
partition subset of the original instance.  Conversely, pad any original
subset summing to $B$ with enough appended zero entries to obtain cardinality
$m$.
\end{proof}

The preprocessing assumption $0\le y_i\le B$ also gives
\begin{equation}
 c_i\le(m+2)B<\frac{R}{m-1}\le\frac{R}{3}<\frac{R}{2},
 \label{eq:ci-small}
\end{equation}
where the strict middle inequality follows from
$R/(m-1)=(m+2+3/(m-1))B$.

\subsection{Jobs, anchors, and the maintenance interval}

For every $c_i$, create an item job
\begin{equation}
 J_i=(a_i,b_i)=(R+c_i,\ 2R-c_i).
 \label{eq:item-job}
\end{equation}
Thus $a_i+b_i=3R$.  For an item subset $S$, write
$k=|S|$ and $Z=\sum_{i\in S}c_i$.  Then
\begin{equation}
 a(S)=kR+Z,
 \qquad
 b(S)=2kR-Z,
 \qquad
 a(S)+b(S)=3kR.
 \label{eq:item-loads}
\end{equation}

Let $Q=10mR$ and introduce two anchor jobs
\begin{equation}
 J_r=(1,Q),
 \qquad
 J_q=(Q+(m-2)R,Q).
 \label{eq:anchor-jobs}
\end{equation}
Finally, set
\begin{equation}
 s=1+2Q+(2m-1)R,
 \qquad
 t=s+1.
 \label{eq:reduction-hole}
\end{equation}
The unavailability interval therefore has unit length.

Let $N$ denote all $2m+2$ constructed jobs.  Since
$\sum_i c_i=2R$, the total first-machine load is
\begin{equation}
 a(N)=1+Q+3mR<t=2+2Q+(2m-1)R.
 \label{eq:all-M1-before-t}
\end{equation}
Moreover, all jobs cannot complete on $M_2$ before $s$.  At the cut anchored by
$J_r$, the full Johnson sequence has value
\[
 1+2Q+(4m-2)R>s.
\]

\begin{lemma}
\label{lem:early-set-formulation}
For the constructed instance,
\begin{equation}
 \OPT=\min_{E:\,\CJ(E)\le s}\{t+b(N\setminus E)\}.
 \label{eq:early-set-opt}
\end{equation}
\end{lemma}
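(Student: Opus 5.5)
By \cref{lem:canonical} and \cref{lem:partition-makespan}, $\OPT$ is the minimum of $C(E,N\setminus E)$ over all $E\subseteq N$ with $\CJ(E)\le s$. So it suffices to show that, for every such $E$, the late set is nonempty and the tail term $t+b(N\setminus E)$ dominates the second term of \cref{eq:partition-makespan}. Once this holds, $C(E,N\setminus E)=t+b(N\setminus E)$ for each admissible $E$, and \cref{eq:early-set-opt} follows by taking the minimum.

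\emph{Step 1: the late set is nonempty.} The displayed cut at $J_r$ has value $1+2Q+(4m-2)R>s$. Hence $\CJ(N)>s$, so $E=N$ is excluded and $Y=N\setminus E\neq\varnothing$ for every admissible $E$. The first case of \cref{eq:partition-makespan} therefore never occurs.

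\emph{Step 2: the tail dominates.} I must show $a(E)+\CJ(Y)\le t+b(Y)$ for $Y=N\setminus E$. I would bound $\CJ(Y)$ with \cref{eq:johnson-cut}: for any anchor $q\in Y$,
\[
\Psi_q(Y)=a(Y_{\preceq q})+b(Y_{\succeq q})\le a(Y_{\preceq q})+b(Y).
\]
This gives
\[
a(E)+\Psi_q(Y)\le a(E)+a(Y_{\preceq q})+b(Y)\le a(N)+b(Y).
\]
By \cref{eq:all-M1-before-t}, $a(N)<t$. So every anchored cut is below $t+b(Y)$, and hence $a(E)+\CJ(Y)<t+b(Y)$.

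The physical meaning is that all first-machine work finishes before $t$, so the late block on $M_2$ runs without idle time from $t$ onward. Step 2 is then a one-line inequality. The only point needing care is to check that $\Psi_q(Y)\le a(Y_{\preceq q})+b(Y)$ holds for every anchor, which is immediate because the $b$-suffix of $Y$ is contained in $Y$.

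The main remaining check is the arithmetic in \cref{eq:all-M1-before-t}. It uses $\sum_i c_i=2m\Lambda+2B=2R$, so $a(N)=1+(Q+(m-2)R)+2mR+2R=1+Q+3mR$. Comparing with $t=2+2Q+(2m-1)R$ requires $Q\ge (m+1)R$, which $Q=10mR$ gives comfortably.
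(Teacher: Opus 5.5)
Your proof is correct and uses the same two facts as the paper's. The $J_r$ cut gives $\CJ(N)>s$, so the late set is always nonempty; and $a(N)<t$ from \cref{eq:all-M1-before-t} lets the late block run on $M_2$ without idle time from $t$, so the tail term $t+b(N\setminus E)$ dominates. The only difference is packaging: you specialize the general formula from \cref{lem:canonical,lem:partition-makespan}, whereas the paper argues directly, bounding any feasible schedule below by $t+b(D)$ and constructing a schedule of exactly that value.
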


\begin{proof}
In any feasible schedule, let $E$ be the jobs whose second operations finish by
$s$, and put $D=N\setminus E$.  The full set is not feasible before $s$, so
$D\ne\varnothing$.  Deleting $D$ leaves a two-machine schedule for $E$ that
finishes by $s$; Johnson optimality gives $\CJ(E)\le s$.  Every job in $D$ must
receive its entire second operation after $t$, because any processing interrupted
at $s$ is lost.  Hence $C_{\max}\ge t+b(D)$.

Conversely, take any $E$ with $\CJ(E)\le s$.  Schedule $J(E)$ before the
interval and continue processing the first operations of $D$ on $M_1$.  By
\cref{eq:all-M1-before-t}, all first operations are complete by $t$.  Starting
at $t$, process all second operations of $D$ consecutively.  This produces a
feasible permutation schedule of value $t+b(D)$.  Minimization proves
\cref{eq:early-set-opt}.
\end{proof}

Equation~\eqref{eq:early-set-opt} says that an optimal early set maximizes
$b(E)$ subject to Johnson feasibility.

\subsection{The two feasibility inequalities}

By \cref{eq:ci-small}, each item job is Johnson-early because
$R+c_i<2R-c_i$.  In every set containing both anchors, the restricted Johnson
order is
\[
  J_r,\quad \text{selected item jobs},\quad J_q.
\]
Indeed, $J_r$ is early and comes first, while $J_q$ is late and comes last.

For a job $j$ in a Johnson sequence, let
\[
 H_j=\sum_{i\preceq_{\mathrm J}j}a_i
     +\sum_{j\preceq_{\mathrm J}i}b_i.
\]
Suppose two selected item jobs $i,j$ are consecutive with $i$ before $j$.
Then
\[
 H_j-H_i=a_j-b_i=c_i+c_j-R<0
\]
by \cref{eq:ci-small}.  From $J_r$ to the first item,
$H_i-H_r=a_i-b_r=R+c_i-Q<0$.  Hence all item cuts are dominated by the two
anchor cuts.

\begin{lemma}
\label{lem:anchor-feasibility}
For an item subset $S$,
\begin{equation}
 \CJ(\{J_r,J_q\}\cup S)\le s
 \quad\Longleftrightarrow\quad
 \begin{cases}
   a(S)\le(m+1)R,\\
   b(S)\le(2m-1)R.
 \end{cases}
 \label{eq:two-anchor-constraints}
\end{equation}
\end{lemma}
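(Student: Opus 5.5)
Since \cref{eq:johnson-cut} writes $\CJ$ as the maximum of the Johnson cuts, and the paragraph before the lemma shows that every item cut is dominated by an anchor cut, I will reduce the feasibility condition to the two anchor cuts. Let $T=\{J_r,J_q\}\cup S$ in its restricted Johnson order $J_r$, the items of $S$, $J_q$. Then
\[
 \CJ(T)=\max\{\Psi_{J_r}(T),\Psi_{J_q}(T)\}.
\]
To justify dropping the item cuts I will use the computed differences. Write the items of $S$ in order as $i_1,\dots,i_k$. The differences $H_{i_1}-H_r<0$ and $H_{i_{h+1}}-H_{i_h}<0$ show that the cut values strictly decrease along the items, so $H_{i_h}<H_r$ for every $h$. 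When $S=\varnothing$ there are no item cuts and nothing needs checking.

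Next I will compute the two anchor cuts from \cref{eq:item-loads,eq:anchor-jobs}. The cut anchored at $J_r$ contributes $a_r$ plus all $b$-values:
\[
 \Psi_{J_r}(T)=1+Q+b(S)+Q=1+2Q+b(S).
\]
The cut anchored at $J_q$ contributes all $a$-values plus $b_q$:
\[
 \Psi_{J_q}(T)=1+a(S)+Q+(m-2)R+Q=1+2Q+(m-2)R+a(S).
\]
Comparing each of these with $s=1+2Q+(2m-1)R$ from \cref{eq:reduction-hole} gives the two conditions. The first cut is at most $s$ if and only if $b(S)\le(2m-1)R$. The second is at most $s$ if and only if $a(S)\le(m+1)R$. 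Together these give both directions of \cref{eq:two-anchor-constraints}, since $\CJ(T)\le s$ holds exactly when both anchor cuts are at most $s$.

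The main obstacle is making sure the restricted order really is $J_r$, the items, then $J_q$. This needs three facts:
\begin{enumerate}[label=(\roman*)]
 \item $J_r$ is early, and its $a$-value $1$ is smaller than every item's $a_i=R+c_i$, so it comes first among the early jobs.
 \item Every item is early, which \cref{eq:ci-small} gives through $R+c_i<2R-c_i$.
 \item $J_q$ is late, because $Q+(m-2)R>Q$ when $m\ge4$, so it comes after all early jobs.
\end{enumerate}
A second point to check is the sign claim $c_i+c_j-R<0$ for two consecutive items. This follows from $c_i<R/2$ in \cref{eq:ci-small}.
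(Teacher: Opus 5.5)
Your proposal is correct and follows the paper's proof. Like the paper, you use the preceding cut comparison to reduce $\CJ(\{J_r,J_q\}\cup S)$ to $\max\{H_r,H_q\}$, compute $H_r=1+2Q+b(S)$ and $H_q=1+2Q+(m-2)R+a(S)$, and compare each with $s$; your extra checks of the restricted Johnson order and of $c_i+c_j<R$ only make explicit what the paper establishes in the paragraph before the lemma.
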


\begin{proof}
The preceding cut comparison gives
$\CJ(\{J_r,J_q\}\cup S)=\max\{H_r,H_q\}$.  Direct calculation yields
\[
 H_r=1+2Q+b(S),
 \qquad
 H_q=1+2Q+(m-2)R+a(S).
\]
Comparing both expressions with $s$ from \cref{eq:reduction-hole} gives exactly
the two inequalities in \cref{eq:two-anchor-constraints}.
\end{proof}

Adding the two constraints and using \cref{eq:item-loads} shows that every
feasible anchored item set has
\[
 3kR=a(S)+b(S)\le3mR,
 \qquad\text{hence }k\le m.
\]
If $k=m$, the two inequalities further imply
\[
 mR+Z\le(m+1)R
 \quad\text{and}\quad
 2mR-Z\le(2m-1)R,
\]
so $Z=R$.  The converse is immediate.  Therefore
\begin{equation}
 \{J_r,J_q\}\cup S\text{ is early-feasible with }|S|=m
 \quad\Longleftrightarrow\quad
 \sum_{i\in S}c_i=R.
 \label{eq:encoded-partition}
\end{equation}

Both anchors must occur in every optimal early set.  The pair
$\{J_r,J_q\}$ is feasible and has early $b$-load $2Q$.  An early set omitting at
least one anchor has
\[
 b(E)\le Q+(4m-2)R<2Q,
\]
because $(4m-2)R<10mR=Q$.  Such a set cannot maximize $b(E)$.

\subsection{The inverse-polynomial gap}

Define
\begin{equation}
 K=t+2mR.
 \label{eq:gap-threshold}
\end{equation}

\begin{lemma}
\label{lem:gap}
If the \textnormal{\textsc{Partition}} instance is a YES instance, then
$\OPT=K-R$.  If it is a NO instance, then $\OPT>K$.
\end{lemma}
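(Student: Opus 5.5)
The plan is to work directly from \cref{lem:early-set-formulation}, which gives $\OPT=t+b(N)-\max\{b(E):\CJ(E)\le s\}$. So the whole question becomes the maximum early $b$-load. First compute $b(N)$. The anchors contribute $2Q$. The items contribute $\sum_i(2R-c_i)=4mR-2R$, since there are $2m$ items and $\sum_i c_i=2R$. Hence $b(N)=2Q+(4m-2)R$. By the remark preceding the gap section, every optimal early set contains both anchors, so we may write $E=\{J_r,J_q\}\cup S$ and $b(E)=2Q+b(S)=2Q+2kR-Z$, where $k=|S|$ and $Z=\sum_{i\in S}c_i$. Thus
\[
\OPT=t+(4m-2)R-\max_S\,(2kR-Z),
\]
with the maximum taken over item sets $S$ satisfying the two inequalities of \cref{lem:anchor-feasibility}.

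For the YES case, take the set $S$ from \cref{lem:cardinality-lift}. It has $k=m$ and $Z=R$, so it is feasible by \cref{eq:encoded-partition} and gives $b(S)=(2m-1)R$. This already equals the upper bound $b(S)\le(2m-1)R$ imposed by \cref{lem:anchor-feasibility}, so it is optimal. Therefore $\OPT=t+(4m-2)R-(2m-1)R=t+(2m-1)R=K-R$.

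For the NO case, fix any feasible $S$. We know $k\le m$. If $k=m$, then \cref{eq:encoded-partition} forces $Z=R$, which would give a YES instance through \cref{lem:cardinality-lift}. Hence $k\le m-1$. Then $b(S)=2kR-Z\le 2(m-1)R-Z<(2m-2)R$, where strictness holds because every $c_i\ge\Lambda>0$ and $S$ is nonempty. The case $S=\varnothing$ gives $b(S)=0$, which is even smaller. Consequently $\OPT>t+(4m-2)R-(2m-2)R=t+2mR=K$.

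The main obstacle is making sure the optimum really lives in this restricted family. Concretely, one must check that an early set omitting an anchor cannot do better, and the earlier comparison $b(E)\le Q+(4m-2)R<2Q$ settles this. One must also check that item sets with $k=m$ are excluded in the NO case, which follows from the equivalence \cref{eq:encoded-partition} together with \cref{lem:cardinality-lift}. With these established, the gap is exact, and since $R$ is an inverse-polynomial fraction of $K$, the inapproximability argument can proceed.
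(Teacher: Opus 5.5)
Your proposal is correct and follows essentially the same route as the paper: you reduce to maximizing the early $b$-load via \cref{lem:early-set-formulation}, restrict to sets containing both anchors, and use \cref{eq:encoded-partition} with \cref{lem:cardinality-lift} to force $k\le m-1$ in the NO case. The differences are minor. You certify YES-case optimality directly from the constraint $b(S)\le(2m-1)R$ of \cref{lem:anchor-feasibility}, which is slightly cleaner than the paper's comparison against sets with $k\le m-1$. In the NO case you exclude anchor-omitting sets through the earlier $b(E)<2Q$ comparison, whereas the paper uses the direct bound that their late side carries at least $Q>2mR$.
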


\begin{proof}
In a YES instance, \cref{lem:cardinality-lift} gives an item subset $S$ with
$|S|=m$ and $Z=R$.  It is feasible with both anchors by
\cref{eq:encoded-partition}, and its item $b$-load is
$b(S)=(2m-1)R$.  No anchored feasible set with fewer than $m$ items has as much
item $b$-load, since for $k\le m-1$,
$b(S)=2kR-Z<2(m-1)R$.  Every optimal early set contains both anchors, so the
maximum early load uses such an $m$-item set.  The total item $b$-load is
$(4m-2)R$; consequently the late load is $(2m-1)R$ and
\[
 \OPT=t+(2m-1)R=K-R.
\]

Now suppose the instance is NO.  If an early set omits an anchor, its late side
contains at least $Q>2mR$ units of second-machine work.  If it contains both
anchors, \cref{eq:encoded-partition} rules out $k=m$, and therefore $k\le m-1$.
Its early item load is strictly less than $2(m-1)R$, so its late item load is
strictly greater than
\[
 (4m-2)R-2(m-1)R=2mR.
\]
In either case \cref{lem:early-set-formulation} gives $\OPT>t+2mR=K$.
\end{proof}

The gap is an inverse-polynomial fraction of the threshold.  From
\cref{eq:reduction-hole,eq:gap-threshold},
\begin{equation}
 t=2+(22m-1)R,
 \qquad
 K=2+(24m-1)R,
 \qquad
 \frac{K}{R}=O(m).
 \label{eq:relative-gap}
\end{equation}
All constructed numbers have binary encoding length
$O(\log m+\log B)$ beyond the source encoding: $R=(m^2+m+1)B$ and $Q=10mR$.
Thus the reduction is polynomial.

\begin{theorem}[FPTAS lower bound]
\label{thm:no-fptas}
Unless $\mathrm P=\mathrm{NP}$, the problem
$F2\mid\mathrm{nr\mbox{-}a}(M_2)\mid C_{\max}$ has no FPTAS.  This holds even
when $t-s=1$.
\end{theorem}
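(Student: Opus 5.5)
The argument is the standard gap argument: an FPTAS would let us separate the two cases of \cref{lem:gap} in polynomial time, and hence solve \textsc{Partition}. Start with a \textsc{Partition} instance $x_1,\ldots,x_m$. If the instance falls under one of the excluded preprocessing cases ($m\le3$, an item equal to $B$, or an item exceeding $B$), answer directly. Otherwise, build the scheduling instance of this section: the $2m$ item jobs \cref{eq:item-job}, the two anchors \cref{eq:anchor-jobs}, and the interval $[s,t]$ from \cref{eq:reduction-hole}. Its data are integers of bit length $O(\log m+\log B)$ beyond the source encoding, the construction takes polynomial time, and $t-s=1$.

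Next, choose $\varepsilon$ so that a $(1+\varepsilon)$-approximation cannot confuse the two cases. The natural choice is $\varepsilon=R/(2K)$, or any value with $(1+\varepsilon)(K-R)\le K$, that is, $\varepsilon\le R/(K-R)$. By \cref{eq:relative-gap}, $K/R=24m-1+2/R\le 25m$. So we may fix $\varepsilon=1/(25m)$; then $1/\varepsilon=O(m)$ and $\varepsilon$ is independent of the numbers. Check the inequality: $\varepsilon(K-R)<K/(25m)\le R$, so $(1+\varepsilon)(K-R)<K$.

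Run the hypothetical FPTAS with this $\varepsilon$. Its running time is polynomial in the input length and in $1/\varepsilon=O(m)$, hence polynomial in the \textsc{Partition} encoding. Let $C$ be the makespan it returns; computing $C$ from the output schedule is trivial. In a YES instance, \cref{lem:gap} gives $C\le(1+\varepsilon)\OPT=(1+\varepsilon)(K-R)<K$. In a NO instance, \cref{lem:gap} gives $C\ge\OPT>K$. Therefore answering YES exactly when $C\le K$ decides \textsc{Partition} in polynomial time. Since \textsc{Partition} is NP-complete~\cite{GareyJohnson1979}, this forces $\mathrm P=\mathrm{NP}$.

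The main point to check carefully is that $\varepsilon^{-1}$ is polynomial in $m$ alone and not in $B$. This is exactly what the relative-gap estimate $K/R=O(m)$ provides, and it is why the reduction does not merely re-prove weak NP-hardness. The reduction also respects $t-s=1$, which gives the unit-length strengthening.
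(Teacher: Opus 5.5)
Your proposal is correct and follows the paper's proof. You run the hypothetical FPTAS on the unit-interval two-anchor instance with $\varepsilon=\Theta(1/m)$ and compare the returned makespan with $K$, using the separation $\OPT=K-R$ versus $\OPT>K$ from \cref{lem:gap}. The only difference is the constant, $\varepsilon=1/(25m)$ instead of the paper's $1/(50m)$, and your check $\varepsilon(K-R)<K/(25m)\le R$ via \cref{eq:relative-gap} is valid.
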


\begin{proof}
Assume an FPTAS exists and run it on the constructed instance with
\[
 \varepsilon=\frac{1}{50m}.
\]
The running time is polynomial in the source input length because
$1/\varepsilon=50m$.  In a YES instance,
\[
 C^{\mathrm{FP}}
 \le(1+\varepsilon)(K-R)<K.
\]
For the strict inequality, observe that
\[
 \varepsilon(K-R)
 =\frac{2+(24m-2)R}{50m}<R.
\]
In a NO instance, every feasible schedule has
$C^{\mathrm{FP}}\ge\OPT>K$.  Comparing the returned value with $K$ therefore
decides \textsc{Partition} in polynomial time, a contradiction unless
$\mathrm P=\mathrm{NP}$.  Unit interval length follows directly from
\cref{eq:reduction-hole}.
\end{proof}

\section{Discussion}
\label{sec:discussion}

The three results expose two different forms of structure.  Johnson's rule
continues to solve every fixed subset exactly, and this local order is strong
enough to support both the approximation algorithm and the exact dynamic
program.  The global decision is harder: assigning a job to the early side can
reduce the late $b$-tail while increasing an anchored critical path by
$a_i-b_i$.  Algorithm $H$ makes these effects comparable by isolating six
large $b$-jobs and forcing every remaining second-machine contribution below
$\OPT/7$.

The dynamic program handles the same conflict without approximation by keeping
both Johnson completion times.  Its numerical state space is finite and
pseudopolynomial, but a standard one-dimensional knapsack trimming argument
cannot preserve both critical-path coordinates and the exact boundary
$\CJ(X)\le s$.  The reduction makes the obstruction explicit.  The two anchors
create two simultaneously tight constraints, one on $a(S)$ and one on $b(S)$;
rounding either coordinate can change whether $m$ item jobs fit before the fixed
interval.  Because crossing this boundary moves an entire $b$-operation into the
late tail, the resulting objective gap is already $\Theta(1/m)$ in relative
terms.

This also clarifies the asymmetry between the two possible locations of the
non-availability interval.  For the non-resumable first-machine model, an FPTAS
is known~\cite{LuEtAl2026}.  For the second-machine model, the anchors in
\cref{sec:no-fptas} couple the classical flow-shop critical path to the
post-maintenance load, excluding an FPTAS.  Exchanging $a_i$ and $b_i$ and
reversing a Johnson sequence does not preserve the fixed-time interaction, so
the classical reversal symmetry of the unconstrained two-machine flow shop does
not transfer to this setting.

Several questions remain open.  The $10/7$ ratio is an upper bound rather than a
matching threshold.  Improving it would require either a sharper use of the six
large-job enumeration or a new family of complementary partitions.  On the
exact side, the four-dimensional dynamic program favors clarity over state-space
minimality; safe dominance rules or a more compact multiobjective
representation could improve its practical complexity without implying an
FPTAS.  Finally, additional fixed intervals or non-resumable interruptions on
both machines may produce stronger inapproximability, but they also require new
canonical-decomposition arguments.

\section{Conclusion}
\label{sec:conclusion}

We developed a unified theory for
$F2\mid\mathrm{nr\mbox{-}a}(M_2)\mid C_{\max}$.  Algorithm $H$ achieves a
$10/7$ approximation by separating the placement of six large jobs from a ratio
ordered small-job partition and then sequencing every candidate by Johnson's
rule.  A four-dimensional reachability dynamic program computes an exact optimum
in pseudopolynomial time.  Finally, a two-anchor reduction from
\textsc{Partition} produces an inverse-polynomial YES/NO gap and rules out an
FPTAS unless $\mathrm P=\mathrm{NP}$, even for a unit-length interruption.  The
combination of these results pinpoints the second machine as a genuine
approximability boundary in the non-resumable two-machine flow shop.

\backmatter

\bibliography{algorithmica-article}

\end{document}